\newtheorem{observation}{Observation}
\newtheorem{theorem}{Theorem}
\newtheorem{lemma}{Lemma}
\newtheorem{corollary}{Corollary}
\title{Dynamic Graph Coloring\thanks{An extended abstract of this paper appeared in the proceedings of the 15th Algorithms and Data Structures Symposium (WADS 2017)~\cite{BCKLRRV2017Coloring}. M.~K.~was partially supported by MEXT KAKENHI grant Nos.~12H00855, and 17K12635. M.~K., A.~v.~R. and M.~R. were supported by JST ERATO Grant Number JPMJER1201, Japan. L.~B.~was supported by the ETH Postdoctoral Fellowship. S.~V.~was partially supported by NSERC and the Carleton-Fields postdoctoral award. S.~L.~is Directeur de Recherches du F.R.S.-FNRS.}}
\author{Luis Barba\thanks{Dept. of Computer Science, ETH Z\"urich, Switzerland.\texttt{luis.barba@inf.ethz.ch}}
  \and
  Jean Cardinal\thanks{D\'{e}partment d'Informatique, Universit\'{e} Libre de Bruxelles, Brussels, Belgium. \texttt{\{jcardin,stefan.langerman\}@ulb.ac.be}}
  \and
  Matias Korman\thanks{Tohoku University, Sendai, Japan. \texttt{mati@dais.is.tohoku.ac.jp}} \and
        Stefan Langerman\footnotemark[3]
  \and
        Andr\'e van Renssen\thanks{University of Sydney, Sydney, Australia. \texttt{andre.vanrenssen@sydney.edu.au}}
  \and
        Marcel Roeloffzen\thanks{TU Eindhoven, Eindhoven, Netherlands. \texttt{m.j.m.roeloffzen@tue.nl}} 
  \and
        Sander Verdonschot\thanks{School of Computer Science, Carleton University, Ottawa, Canada. \texttt{sander@cg.scs.carleton.ca}}
        }
\newcommand{\N}{\ensuremath{N}\xspace}
\newcommand{\NR}{\ensuremath{N_R}\xspace}
\newcommand{\cR}{\ensuremath{\mathcal{C}}\xspace}
\newcommand{\cRmax}{\ensuremath{\mathcal{C}_{max}}\xspace}
\date{}
\begin{document}
\maketitle

\begin{abstract}
In this paper we study the number of vertex recolorings that an algorithm needs to perform in order to maintain a proper coloring of a graph under insertion and deletion of vertices and edges.
We present two algorithms that achieve different trade-offs between the number of recolorings and the number of colors used.
For any $d>0$, the first algorithm maintains a proper $O(\cR d\N^{1/d})$-coloring while recoloring at most $O(d)$ vertices per update, where $\cR$ and $\N$ are the maximum chromatic number and maximum number of vertices, respectively.
The second algorithm reverses the trade-off, maintaining an $O(\cR d)$-coloring with $O(d\N^{1/d})$ recolorings per update.
The two converge when $d = \log \N$, maintaining an $O(\cR \log \N)$-coloring with $O(\log \N)$ recolorings per update.
We also present a lower bound, showing that any algorithm that maintains a $c$-coloring of a $2$-colorable graph on $\N$ vertices must recolor at least $\Omega(\N^\frac{2}{c(c-1)})$ vertices per update, for any constant $c \geq 2$.
\end{abstract}

\section{Introduction}

It is hard to underestimate the importance of the graph coloring problem in computer science and combinatorics.
The problem is certainly among the most studied questions in those fields, and countless applications and variants have been tackled since it was first posed for the special case of maps in the mid-nineteenth century.
Similarly, the maintenance of some structures in {\em dynamic graphs} has been the subject of study of several volumes in the past couple of decades~\cite{baswana2015fully,baswana2012fully,holm2001poly,roditty2002improved,roditty2004dynamic,thorup2007fully}.
In this setting, an algorithmic graph problem is modelled in the dynamic environment as follows. 
There is an online sequence of insertion and deletion of edges or vertices, and our goal is to maintain the solution of the graph problem after each update. 
A trivial way to maintain this solution is to run the best static algorithm for this problem after each update; however, this is clearly not optimal. A dynamic graph algorithm seeks to maintain some clever data structure for the underlying problem such that the time taken to update the solution is much smaller than that of the best static algorithm. 

In this paper, we study the problem of maintaining a coloring in a dynamic graph undergoing insertions and deletions of both vertices and edges.
At first sight, this may seem to be a hopeless task, since there exist near-linear lower bounds on the competitive factor of online graph coloring algorithms~\cite{HS94}, a restricted case of the dynamic setting.
In order to break through this barrier, we allow a ``fair'' number of {\em vertex recolorings} per update.
We focus on the combinatorial aspect of the problem -- the trade-off between the number of colors used versus the number of recolorings per update.
We present a strong general lower bound and two simple algorithms that provide complementary trade-offs.

\subsubsection{Definitions and Results.}

Let $\cR$ be a positive integer.
A \emph{$\cR$-coloring} of a graph $G$ is a function that assigns a color in $\{1, \ldots, \cR\}$ to each vertex of $G$.
A $\cR$-coloring is \emph{proper} if no two adjacent vertices are assigned the same color.
We say that $G$ is \emph{$\cR$-colorable} if it admits a proper $\cR$-coloring, and we call the smallest such $\cR$ the \emph{chromatic number} of $G$.

A \emph{recoloring algorithm} is an algorithm that maintains a proper coloring of a simple graph while that graph undergoes a sequence of updates. 
Each update adds or removes either an edge or a vertex with a set of incident edges. 
We say that a recoloring algorithm is \emph{$c$-competitive} if it 
uses at most
$c \cdot \cRmax$ colors, where $\cRmax$ is the maximum chromatic number of the graph during the updates.

For example, an algorithm that computes the optimal coloring after every update is $1$-competitive, but may recolor every vertex for every update. 
At the other extreme, we can give each vertex a unique color, resulting in a linear competitive factor for an algorithm that (re)colors at most 1 vertex per update. 
In this paper, we investigate intermediate solutions that use more than $\cR$ colors but recolor a sublinear number of  vertices per update. 
Note that we do not assume that the value $\cR$ is known in advance, or at any point during the algorithm.

In Section~\ref{section:Upper bounds}, we present two complementary recoloring algorithms: an $O(d\N^{1/d})$-competitive algorithm with an amortized $O(d)$ recolorings per update, and an $O(d)$-competitive algorithm with an amortized $O(d\N^{1/d})$ recolorings per update, where $d$ is a positive integer parameter and $\N$ is the maximum number of vertices in the graph during a sequence of updates. 
Interestingly, for $d = \Theta(\log \N)$, both are $O(\log \N)$-competitive with an amortized $O(\log \N)$ vertex recolorings per update. 
Using standard techniques, the algorithms can be made sensitive to the current (instead of the maximum) number of vertices in the graph. 
In addition, we present de-amortized versions of both algorithms in Section~\ref{section:De-amortization}. These de-amortized versions have the same competitive ratio and recolorings per update as the amortized versions. 

We provide lower bounds in Section~\ref{section:Lower bound}. 
In particular, we show that for any recoloring algorithm $A$ using $c$ colors, there exists a specific 2-colorable graph on $\N$ vertices and a sequence of~$m$ edge insertions and deletions that forces $A$ to perform at least $\Omega(m\cdot \N^\frac{2}{c(c-1)})$ vertex recolorings. 
Thus, any $x$-competitive~recoloring algorithm performs in average at least $\Omega(\N^\frac{1}{x(2x-1)})$ recolorings per update.

To allow us to focus on the combinatorial aspects, we assume that we have access to an algorithm that, at any time, can color the current graph (or an induced subgraph) using few colors.
Of course, finding an optimal coloring of an $n$-vertex graph is NP-complete in general~\cite{karp1972reducibility} and even NP-hard to approximate to within $n^{1-\epsilon}$ for any $\epsilon >0$~\cite{zuckerman2007linear}.
Still, this assumption is not as strong as it sounds.
Most practical instances can be colored efficiently~\cite{coudert1997exact}, and for several important classes of graphs the problem is solvable or approximable in polynomial time, including bipartite graphs, planar graphs, $k$-degenerate graphs, and unit disk graphs~\cite{marathe1995simple}.

\subsubsection{Related results.}
\emph{Dynamic graph coloring.} The problem of maintaining a coloring of a graph that evolves over time has been tackled before, but to our knowledge, only from the
points of view of heuristics and experimental results. This includes for instance results from Preuveneers and Berbers~\cite{PB04}, Ouerfelli and Bouziri~\cite{OB11}, 
and Dutot et al.~\cite{DGOP07}. A related problem of maintaining a graph-coloring in an online fashion was studied by Borowiecki and Sidorowicz~\cite{BS12}. In that problem, vertices lose their color, and the algorithm is asked to recolor them. 

\emph{Online graph coloring.} The online version of the problem is closely related to our setting, except that most variants of the online problem only allow the coloring of new vertices, which then cannot be recolored later.
Near-linear lower bounds on the best achievable competitive factor have been proven by Halld\'orsson and Szegedy more than two
decades ago~\cite{HS94}. They show their bound holds even when the model is relaxed to allow a constant fraction of the vertices to change color over the whole sequence. 
This, however, does not contradict
our results. We allow our algorithms to recolor all vertices at some point, but we bound only the number of recolorings {\em per update}. 
Algorithms for online coloring with competitive factor coming close, or equal to this lower bound have been proposed by  Lov{\'{a}}sz et al.~\cite{LST89}, Vishwanathan~\cite{V92},
and Halld{\'{o}}rsson~\cite{H97}.

\emph{Dynamic graphs.} Several techniques have been used for the maintenance of other structures in dynamic graphs, such as spanning trees, transitive closure, 
and shortest paths. Surveys by Demetrescu et al.~\cite{DEGI10,DFI05} give a good overview of those. Recent progress on dynamic connectivity~\cite{DBLP:conf/soda/KapronKM13} and 
approximate single-source shortest paths~\cite{DBLP:conf/soda/HenzingerKN14} are witnesses of the current activity in this field.

\emph{Data structure dynamization.} Our bucketing algorithms are very much inspired by standard techniques for the dynamization of static data structures, pioneered by Bentley and Saxe~\cite{SB79,BS80}, and by Overmars and van Leeuwen~\cite{OL81}.

\subsection{Outline}

In this section, we describe the intuition behind the two complementary recoloring algorithms presented in this paper: the small and the large bucket algorithms.
Both algorithms partition the vertices of the graph into a set of buckets.
Each bucket has $\cR$ colors that are not used by any other bucket. These colors are used to properly color the subgraph induced by the vertices the bucket contains.
This guarantees that the entire graph is always properly colored.
Recall however that we assume that our algorithms have no prior knowledge of the value of $\cR$.

The small-buckets algorithm uses many ``small'' buckets.
This causes it to use more colors, but fewer recolorings per operation.
The buckets are grouped into $d$ levels (for some integer $d>0$), each containing roughly $n^{1/d}$ buckets, and all buckets on the same level have the same capacity (roughly $n^{i/d}$ for level $i$). (Here $n$ denotes the current number of vertices.)
Since each bucket uses at most the first $\cR$ colors of its unique set of colors, the small-buckets algorithm uses a total of $O(d \N^{1/d}\cdot \cR)$ colors.

The idea of the algorithm is simple: every time that an edge is added, we remove one of its endpoints from the bucket it lies in, and we move it to an empty bucket in the first level.
At some point this operation ``fills'' the first level of buckets by filling all buckets at that level. Then all the vertices in this level are promoted to the next level.
This promotion can be propagated again at the next level if that is also filled.
If this propagation reaches the top level, a global recoloring is performed.
Using amortization arguments, we can show that the algorithm performs $O(d)$ amortized recolorings per update.
Intuitively, it suffices to move and recolor a constant number of vertices from each level during each update.

Our second algorithm uses larger buckets and thus uses fewer colors, but more recolorings per operation.
Intuitively, a big bucket is the result of merging all buckets on one level of the small-buckets algorithm.
Thus, we get $d$ buckets, corresponding to the $d$ levels used above, where bucket $i$ has size roughly $n^{(i+1)/d}$.
Since each bucket uses $\cR$ colors, the big-buckets algorithm uses a total of $O(d \cdot \cR)$ colors.
The number of recolorings per insertion however is larger as each insertion triggers a recoloring of the smallest bucket.
Whenever a bucket becomes full, it is emptied into the next bucket, which is in turn recolored.
If this propagation reaches the top level, a global recoloring is performed.
We show that the big-buckets algorithm performs $O(d \N^{1/d})$ amortized recolorings per update.
Using standard de-amortization techniques, we are able to obtain the same bounds in the worst-case for both algorithms.

For the lower bound, consider a graph consisting of three stars with $n/3$ vertices each. 
If a recoloring algorithm wants to maintain a 2-coloring of this graph, then two of the stars will have the same color scheme. 
By linking their roots, we force the algorithm to recolor at least $n/3$ vertices and removing the added edge brings us back to the initial state with the three 2-colored stars, two of them having the same color scheme. Repeating this process shows that any recoloring algorithm that maintains a 2-coloring needs to perform $\Omega(n)$ vertex recolorings per update. 
If we want to maintain a $c$-coloring instead, then this idea can be extended and used in different phases.  
We start constructing trees that are formed by merging stars with the same coloring scheme. 
Our construction builds up larger and larger trees through updates and with every step forces the algorithm to either recolor many vertices or use new colors. 
Eventually the algorithm has used up all its colors and is forced to recolor a large number of vertices.

\section{Upper bound: Recoloring-algorithms}\label{section:Upper bounds}

Before describing the specific strategies, we first introduce some concepts and definitions that are common to all our algorithms.

It is easy to see that deleting a vertex or edge never invalidates the coloring of the graph.
As such, our algorithms do not perform any recolorings when vertices or edges are deleted.
The same is true when an edge is inserted between two vertices of different color, leaving only the insertion of an edge between two vertices of the same color, and the insertion of a new vertex, connected to a given set of current vertices, as interesting cases.
In our algorithms, we simplify this even further, by implementing the edge insertion case as deleting one of its endpoints and re-inserting it with its new set of adjacent edges.
Therefore both the description of the algorithms and the proofs in this section consider only vertex insertions.

Our algorithms partition the vertices into a set of \emph{buckets}, each of which has its own set of colors that it uses to color the vertices it contains.
This set of colors is completely distinct from the sets used by other buckets.
Since all our algorithms guarantee that the subgraph induced by the vertices inside each bucket is properly colored, this implies that the entire graph is properly colored at all times.

The algorithms differ in the number of buckets they use and the size (maximum number of vertices) of each bucket.
Typically, there is a sequence of buckets of increasing size, and one \emph{reset bucket} that can contain arbitrarily many vertices and that holds vertices whose color has not changed for a while.
Initially, the size of each bucket depends on the number of vertices in the input graph.
As vertices are inserted and deleted, the current number of vertices changes.
When certain buckets are full, we \emph{reset} everything, to ensure that we can accommodate the new number of vertices.
This involves emptying all buckets into the reset bucket, computing a proper coloring of the entire graph, and recomputing the sizes of the buckets in terms of the current number of vertices.

We refer to the number of vertices during the most recent reset as $\NR$, and we express the size of the buckets using $s = \lceil \NR^{1/d} \rceil$, where $d>0$ is an integer parameter that allows us to achieve different trade-offs between the number of colors and number of recolorings used.
Since $s = O(\N^{1/d})$, where $\N$ is the maximum number of vertices thus far, we state our bounds in terms of $\N$.
Note that it is also possible to keep $\NR$ within a constant factor of the current number of vertices by triggering a reset whenever the current number of vertices becomes too small or too large.
Standard amortization techniques can be used to show that this would cost only a constant number of additional amortized recolorings per insertion or deletion, although deamortization would be more complicated.
We omit these details for the sake of simplicity.

\subsection{Small-buckets algorithm}

Our first algorithm, called the \emph{small-buckets algorithm}, uses a lot of colors, but needs very few recolorings.
In addition to the reset bucket, the algorithm uses $d s$ buckets, grouped into $d$ \emph{levels} of $s$ buckets each.
All buckets on level $i$, for $0 \leq i < d$, have capacity $s^i$ (see Fig.~\ref{fig:small-buckets}).
Initially, the reset bucket contains all vertices, and all other buckets are empty.
Throughout the execution of the algorithm, we ensure that every level always has at least one empty bucket.
We call this the \emph{space invariant}.

\begin{figure}[ht]
 \centering
 \includegraphics{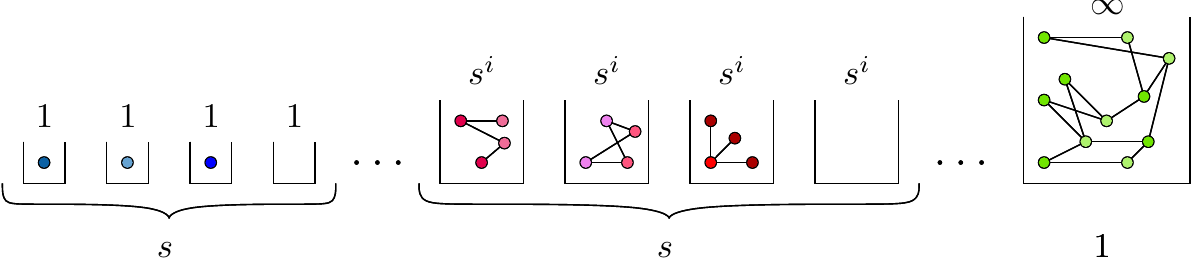}
 \caption{\small The small-buckets algorithm uses $d$ levels, each with $s$ buckets of capacity $s^i$, where $i$ is the level, $s = \lceil \NR^{1/d} \rceil$, and $\NR$ is the number of vertices during the last reset.}
 \label{fig:small-buckets}
\end{figure}

When a new vertex is inserted, we place it in any empty bucket on level $0$.
The space invariant guarantees the existence of this bucket.
Since this bucket has a unique set of colors, assigning one of them to the new vertex establishes a proper coloring.
Of course, if this was the last empty bucket on level $0$, filling it violates the space invariant. 
In that case, we gather up all $s$ vertices on this level, place them in the first empty bucket on level $1$ (which has capacity $s$ and must exist by the space invariant), and compute a new coloring of their induced graph using the set of colors of the new bucket.
If this was the last free bucket on level $1$, we move all its vertices to the next level and repeat this procedure. 
In general, if we filled the last free bucket on level $i$, we gather up all at most $s \cdot s^i = s^{i+1}$ vertices on this level, place them in an empty bucket on level $i + 1$ (which exists by the space invariant), and recolor their induced graph with the new colors.
If we fill up the last level ($d - 1$), we reset the structure, emptying each bucket into the reset bucket and recoloring the whole graph.

\begin{theorem}
\label{thm:ub-small}
For any integer $d>0$, the small-buckets algorithm is an $O(d\N^{1/d})$-competitive recoloring algorithm that uses at most $O(d)$ amortized vertex recolorings per update.
\end{theorem}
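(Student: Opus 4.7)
The proof splits naturally into bounding the number of colors and bounding the amortized number of recolorings per update.

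For the color bound, I would simply count: the algorithm maintains $ds+1$ buckets in total (the $d$ levels of $s$ buckets each, together with the reset bucket), and their color palettes are pairwise disjoint by construction. Since the subgraph induced by the vertices of any single bucket is a subgraph of the current input graph, its chromatic number is at most $\cRmax$, so each bucket consumes at most $\cRmax$ palette colors. Thus the total number of colors used is at most $(ds+1)\cRmax = O(d\N^{1/d})\cdot\cRmax$, establishing the $O(d\N^{1/d})$-competitive ratio.

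For the recoloring bound, I would employ a direct charging argument. Since deletions never invalidate the coloring, and edge insertions are implemented as a deletion followed by a vertex reinsertion, it suffices to charge $O(d)$ amortized recolorings to each vertex insertion. One recoloring pays for the initial placement at level $0$, and the rest must cover the cost of the cascades. The key claim, to be proved by induction on $i$, is that between any two consecutive level-$i$ cascades (and before the first one after a reset), at least $s^{i+1}$ vertex insertions are processed. The inductive step follows because the previous level-$i$ cascade empties level $i$, and vertices arrive at level $i$ only via level-$(i-1)$ cascades, so at least $s$ of the latter must occur to make all $s$ level-$i$ buckets simultaneously non-empty, each requiring $s^i$ insertions by the inductive hypothesis. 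Since a level-$i$ cascade recolors at most $s^{i+1}$ vertices, its cost amortizes to at most one recoloring per insertion, and summing over the $d$ levels gives $O(d)$ amortized recolorings per insertion due to cascading.

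It remains to amortize the global reset, which recolors the current vertex set whose size is at most $\NR$ plus the number of insertions since the previous reset. Because a reset is triggered only when a cascade would overflow level $d-1$, the same inductive claim gives that at least $s^d \ge \NR$ insertions occurred in that interval, so the reset amortizes to $O(1)$ recolorings per update. The main obstacle I anticipate is verifying that deletions---which can empty buckets at arbitrary levels---do not break the inductive lower bound on insertions between cascades; fortunately, deletions can only \emph{delay} cascades by leaving additional empty buckets to be refilled, so the insertion count between cascades can only grow while each cascade's recoloring cost can only shrink.
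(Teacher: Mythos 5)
Your proposal is correct, and the color bound is handled exactly as in the paper (disjoint palettes, at most $\cRmax$ colors per bucket since induced subgraphs are $\cRmax$-colorable, $O(ds)$ buckets). For the recoloring bound, however, you take a genuinely different route: the paper proves Theorem~\ref{thm:ub-small} with an accounting (coin) scheme, giving each inserted vertex $d+2$ coins and maintaining the invariant that every non-empty level-$i$ bucket holds $s^i(d-i+1)$ coins, which pays for each merge and for the final reset; you instead do an aggregate analysis, proving by induction that consecutive level-$i$ cascades (and the first one after a reset) are separated by at least $s^{i+1}$ insertions, so each cascade's cost of at most $s^{i+1}$ recolorings amortizes to one recoloring per insertion per level, and the reset cost of at most $\NR$ plus the insertions since the last reset amortizes to $O(1)$ because $s^d \ge \NR$. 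Your key claim is in fact exactly the paper's Lemma~\ref{lem:fill-takes-long}, which the paper only invokes later for de-amortization; so your argument buys a single structural fact that serves both the amortized bound and the de-amortized version, while the paper's coin invariant buys a cleaner explicit constant ($d+2$) and a template that transfers directly to the big-buckets analysis. Two small points to tighten: in the inductive step you should note that a level-$i$ cascade is triggered on the same update as a level-$(i-1)$ cascade, so the gap before the \emph{first} subsequent level-$(i-1)$ cascade also counts, giving the full $s\cdot s^i$ rather than $(s-1)s^i$; and your remark about deletions is right and worth keeping, since deletions only empty buckets and hence only increase the insertion count between cascades.
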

\begin{proof}
The total number of colors is bounded by the maximum number of non-empty buckets ($1 + d(s - 1)$), multiplied by the maximum number of colors used by any bucket.
Let $\cR$ be the maximum chromatic number of the graph.
Since any induced subgraph of a $\cR$-colorable graph is also $\cR$-colorable, each bucket requires at most $\cR$ colors.
Thus, the total number of colors is at most $(1 + d(s - 1))\cR$, and the algorithm is $O(d\N^{1/d})$-competitive.

To analyze the number of recolorings, we use a simple charging scheme that places coins in the buckets and pays one coin for each recoloring. 
Whenever we place a vertex in a bucket on level $0$, we give $d + 2$ coins to that bucket.
One of these coins is immediately used to pay for the vertex's new color, leaving $d + 1$ coins.
In general, we maintain the invariant that each non-empty bucket on level~$i$ has $s^i \cdot (d - i + 1)$ coins.

When we merge the vertices on level $i$ into a new bucket on level $i + 1$, we pay a single coin for each vertex that changes color.
Since each bucket had $s^i \cdot (d - i + 1)$ coins, and we recolored at most $s \cdot s^i = s^{i+1}$ vertices, our new bucket has at least $s \cdot s^i \cdot (d - i + 1) - s^{i+1} = s^{i+1} \cdot (d - (i + 1) + 1)$ coins left, satisfying the invariant.

When we fill up level $d - 1$, we reset the structure and recolor all vertices.
At this point, the buckets on level $d - 1$ have a total of $s \cdot s^{d-1} \cdot (d - (d-1) + 1) = 2s^d$ coins, and no more than $s^d$ vertices.
Since all new vertices are inserted on level~$0$, and vertices are moved to the reset bucket only during a reset, the number of vertices in the reset bucket is at most $\NR$.
Since $s^d = \lceil \NR^{1/d} \rceil^d \geq (\NR^{1/d})^d = \NR$, we have enough coins to recolor all vertices.
Thus, we require no more than $d + 2 = O(d)$ amortized recolorings per update.
\end{proof}

\subsection{Big-buckets algorithm}

\begin{wrapfigure}[8]{R}{3in}
\vspace{-2\baselineskip}
 \centering
 \includegraphics{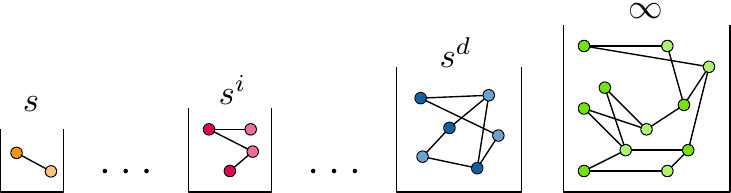}
 \caption{\small Besides the reset bucket, the big-buckets algorithm uses $d$ buckets, each with capacity $s^{i+1}$, where $i$ is the bucket number.}
 \label{fig:big-buckets}
\end{wrapfigure}
Our second algorithm, called the \emph{big-buckets algorithm}, is similar to the small-buckets algorithm, except it merges all buckets on the same level into a single larger bucket.
Specifically, the algorithm uses $d$ buckets in addition to the reset bucket.
These buckets are numbered sequentially from $0$ to $d-1$, with bucket $i$ having capacity $s^{i+1}$, see Fig.~\ref{fig:big-buckets}.
Since we use far fewer buckets, an upper bound on the total number of colors drops significantly, to $(d + 1)\cR$.
Of course, as we will see later, we pay for this in the number recolorings.
Similar to the space invariant in the small-buckets algorithm, the big-buckets algorithm maintains the \emph{high point invariant}: bucket $i$ always contains at most $s^{i+1} - s^i$ vertices (its \emph{high point}).

When a new vertex is inserted, we place it in the first bucket.
Since this bucket may already contain other vertices, we recolor all its vertices, so that the subgraph induced by these vertices remains properly colored.
This revalidates the coloring, but may violate the high point invariant.
If we filled bucket $i$ beyond its high point, we move all its vertices to bucket $i + 1$ and compute a new coloring for this bucket.
We repeat this until the high point invariant is satisfied, or we fill bucket $d - 1$ past its high point.
In the latter case we reset, adding all vertices to the reset bucket and computing a new coloring for the entire graph. 

\begin{theorem}
\label{thm:ub-big}
For any integer $d>0$, the big-buckets algorithm is an $O(d)$-competitive recoloring algorithm that uses at most $O(d\N^{1/d})$ amortized vertex recolorings per update.
\end{theorem}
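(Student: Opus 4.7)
The plan is to follow the same template as the proof of Theorem~\ref{thm:ub-small}. The color bound is immediate: there are exactly $d+1$ buckets, and each colors only its own induced subgraph (which is still $\cR$-colorable) with its own private palette. So the algorithm uses at most $(d+1)\cR = O(d\cdot\cR)$ colors, giving the $O(d)$-competitive ratio.

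For the recoloring bound, since deletions do not trigger recolorings and edge insertions are handled as a deletion plus a vertex insertion, it suffices to bound the amortized cost of a vertex insertion. I would count recolorings level by level. Two observations drive the count: (i) bucket $0$ is recolored on every insertion and contains at most $s$ vertices, and (ii) for $i \geq 1$, bucket $i$ is recolored only when bucket $i-1$ overflows, and by the high point invariant each such overflow requires more than $s^i - s^{i-1}$ vertices to have accumulated in bucket $i-1$.

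Since vertices only propagate upward between resets, each vertex enters bucket $i-1$ at most once between resets. Consequently, over $T$ insertions bucket $i-1$ overflows $O(T/s^i)$ times, and each overflow triggers a recoloring of at most $s^{i+1}$ vertices in bucket $i$. Level $i$ therefore contributes $O(T/s^i \cdot s^{i+1}) = O(Ts)$ recolorings, so the $d$ big buckets together account for $O(dTs)$. The reset bucket adds at most $O(s^d)$ recolorings per reset and occurs at most $O(T/s^d)$ times, contributing $O(T)$ more. Dividing by $T$ yields $O(ds) = O(d\N^{1/d})$ amortized recolorings per update.

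The main obstacle I anticipate is establishing the overflow counts rigorously: verifying that the high point invariant is preserved through cascading overflows, and that the frequency of overflows at each level is correctly tied to the number of vertex insertions rather than arbitrary updates. A coin-based formulation analogous to the one in Theorem~\ref{thm:ub-small} provides an alternative route: give each inserted vertex $ds + O(1)$ coins and charge one coin per recoloring, maintaining the invariant that every vertex in bucket $i$ holds at least $(d-i)s + O(1)$ coins. Either formulation collapses via the geometric sum into the claimed $O(d\N^{1/d})$ bound.
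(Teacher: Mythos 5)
Your proposal is correct and follows essentially the same amortization as the paper's proof: the paper phrases it as an explicit coin invariant (bucket $i$ holds $\lceil k_i/s^i\rceil\cdot s^{i+1}(d-i)$ coins), while your aggregate count of overflow events --- each overflow of bucket $i-1$ requires $\Omega(s^i)$ fresh insertions by the high point invariant and recolors at most $s^{i+1}$ vertices, with resets costing $O(s^d)$ every $\Omega(s^d)$ insertions --- is the same bookkeeping in a different guise, as you yourself note by offering the coin formulation as an alternative. Both your $(d+1)\cR$ color bound and your $O(ds)=O(d\N^{1/d})$ amortized recoloring bound match the paper's argument.
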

\begin{proof}
The bound on the number of colors follows directly from the fact that we use $d$ buckets in addition to the reset bucket.
Hence, we use at most $(d+1)\cR$ colors at any point in time, making the algorithms $O(d)$-competitive.
We proceed to analyze the number of recolorings per update.

As in the small-buckets algorithm, we give coins to each bucket that we then use to pay for recolorings. 
In particular, we ensure that bucket~$i$ always has $P_i = \lceil k_i / s^i \rceil \cdot s^{i+1} \cdot (d - i)$ coins, where $k_i$ is the number of vertices in bucket~$i$.

Consider what happens when we place a vertex into bucket $0$.
Initially, the bucket has $P_0 = \lceil k_0 / s^0 \rceil \cdot s^1 \cdot (d- 0) = k_0 s d$ coins. 
As a result of the insertion, we need to recolor all $k_0 + 1$ vertices, and the invariant requires that the bucket has $(k_0 + 1) s d$ coins afterwards.
By the high point invariant, we have that $1 + k_0 \leq s$, so we can bound the number of coins we need to pay per update by $k_0 + 1 + s d \leq (d + 1)s$.

Recall that this insertion may trigger a promotion of all vertices in bucket $0$ to bucket $1$, and that this could propagate until the high point invariant is satisfied again.
When we merge bucket $i$ into bucket $i + 1$, we need to recolor all vertices in these two buckets.
This will be paid for by the coins stored in the smaller bucket.
At this point, the high point invariant gives us that bucket $i$ contains at least $k_i \geq s^{i+1} - s^i + 1$ vertices.
Thus, since $\lceil k_i/s^i \rceil \geq \lceil (s^{i+1} - s^i + 1) / s^i \rceil = s$, bucket $i$ has at least $P_i = \lceil k_i / s^i \rceil \cdot s^{i+1} \cdot (d - i) \geq s^{i+2} \cdot (d-i)$ coins.

At the same time, bucket $i+1$ has $P_{i+1} = \lceil k_{i+1} / s^{i+1} \rceil \cdot s^{i+2} \cdot (d - i - 1)$ coins, and needs to gain at most $s^{i+2} \cdot (d - i - 1)$ coins, as it gains at most $s^{i+1}$ vertices.
This leaves $s^{i+2} \cdot (d - i) - s^{i+2} \cdot (d - i - 1) = s^{i+2}$ coins to pay for the recoloring.
Since bucket $i + 1$ contained no more than $s^{i+2} - s^{i+1}$ vertices by the high-point invariant, and we added at most $s^{i+1}$ new ones, this suffices to recolor all vertices involved and maintain the coin invariant.

Finally, we perform a reset when bucket $d - 1$ passes its high point. 
In that case, bucket $d - 1$ contains at least $s^d - s^{d-1} + 1$ vertices and therefore has at least $\lceil (s^d - s^{d-1} + 1) / s^{d-1} \rceil \cdot s^d \cdot (d - d + 1) = s^{d+1}$ coins.
Since the reset bucket contains at most $\NR \leq s^d$ vertices, we need to recolor at most $2 s^d$ vertices.
As $s = \lceil \NR^{1/d} \rceil \geq 2$ if $\NR \geq 2$, we have enough coins to pay for all these recolorings.
Therefore we can maintain the coloring with $(d + 1)s = O(d\N^{1/d})$ amortized recolorings per update.
\end{proof}

\section{De-amortization}\label{section:De-amortization}

In this section, we show how to de-amortize the algorithms presented in Section~\ref{section:Upper bounds}.
We distinguish between the two different strategies.

\subsection{Shadow vertices}

To de-amortize the two algorithms, we simulate the amortized version using fake vertices, called \emph{shadow vertices}.
Each real vertex $v$ either has a unique shadow vertex $sh(v)$, representing its state (color and location) in the amortized version, or has no shadow vertex, if it would be in the same location and have the same color in the amortized algorithm.
When an update happens, we first move the shadow vertices exactly as the amortized algorithm would (creating new shadow vertices for real vertices without a shadow if needed), and then move and recolor some real vertices to match their shadows, removing the shadows.
We call the first step the \emph{simulation step}, and the second step the \emph{move step}.
Since we only count recolorings of real vertices, only the move step has any actual cost - we just use the simulation step to keep track of where vertices need to go.

The only difference between the simulated versions of the amortized algorithms and the algorithms as presented in Section~\ref{section:Upper bounds} is that the value for $s$ is not allowed to decrease after a reset.
Thus, $s = \lceil \NR^{1/d} \rceil$, where $\NR$ is the \emph{maximum} number of vertices during any reset so far.

\subsection{Small-buckets algorithm}

The de-amortized version of the small-buckets algorithm uses the same buckets as the amortized version, except for an additional reset bucket.
At each stage of the algorithm there is one primary reset bucket and one secondary reset bucket.
The primary reset bucket contains shadow vertices and real vertices without a shadow, whose colors correspond to the reset bucket in the amortized version.
The secondary reset bucket contains real vertices with a shadow in the primary reset bucket.
During a reset, the primary and secondary reset buckets change roles.

As before, we discuss only vertex insertion.
Recall that the amortized algorithm places a new vertex $v$ in an empty bucket on level $0$, and then iteratively merges full levels into higher ones, possibly triggering a reset if the last level fills up.
During the simulation step, the de-amortized algorithm mimics this.
Note that for the purpose of the simulation, we ignore real vertices with a shadow and instead operate on their shadow vertices.
Thus, a level is considered full if every bucket contains either a shadow vertex or a real vertex without a shadow.
On the other hand, whenever the amortized algorithm uses an empty bucket, we require that that bucket contains no real vertices at all, not even ones with a shadow.
We show later that such a bucket is always available when needed.

We first create a new shadow vertex for $v$ and place it in an empty bucket on level $0$.
Then, if this level is full, we create a shadow vertex for every real vertex on this level without one, and move all shadow vertices to an empty bucket on level $1$.
Here, we color them with the new bucket's colors so that their induced graph is properly colored.
If this fills up the new level, we repeat this until we reach a level that is not full, or we fill up the last level.
In the latter case, we trigger a reset.

During a reset, we create a new shadow vertex for all real vertices without one and move all shadow vertices into the secondary reset bucket, computing a proper coloring for them.
At this point, the primary and secondary reset buckets switch roles.
As in the amortized algorithm, we also recompute the value of $s$.
If~$s$ increases, we add additional empty buckets and increase the capacity of the current buckets (recall that we do not allow $s$ to decrease in the de-amortized versions).

All of this happens during the simulation step.
During the move step, we perform the actual recolorings.
We first move and recolor the inserted vertex $v$ to its shadow: moving it into the bucket containing $sh(v)$, giving it the color of $sh(v)$, and removing $sh(v)$.
Then we move and recolor one vertex from each level to its shadow.
Specifically, for each level, we consider all buckets containing only real vertices with a shadow.
Among those buckets, we pick the bucket with the least number of vertices and move and recolor one of its vertices to its shadow.
Finally, we check the secondary reset bucket for vertices with a shadow and move and recolor one if found.
Thus, we recolor at most $d + 2$ vertices per update.
\medskip

\noindent\textbf{Analysis}
We prove correctness by arguing that an empty bucket is available when needed.

\begin{lemma}
\label{lem:0-bucket-empty}
 After every update, there is at least one empty bucket on level $0$.
\end{lemma}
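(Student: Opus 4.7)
My plan is to prove the inequality $e_0 \geq 1$ by induction on the number of updates, where $e_0$ counts the truly-empty level-$0$ buckets. I classify each level-$0$ bucket into three disjoint types: truly empty ($e_0$ of these), holding only real-with-shadow vertices ($b_0$ of these), or simulation-full, i.e., containing a shadow or a real-without-shadow ($f_0$ of these). Since level-$0$ buckets have amortized capacity $1$ and each of the three types is mutually exclusive, we have $f_0 + b_0 + e_0 = s$, and the space invariant inherited from the simulated amortized algorithm gives $f_0 \leq s - 1$.

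Immediately after a reset $e_0 = s$, so the base case is trivial. For the inductive step I would split on whether the simulation step of the current insertion triggers a level-$0$ promotion. If no promotion is triggered, the sim step decrements $e_0$ by one and the move step increments $e_0$ by one whenever $b_0 > 0$, giving no net change. The subcase $b_0 = 0$ without a promotion is handled by arithmetic: the no-promotion condition $f_0 + 1 \leq s - 1$ combined with $b_0 = 0$ and $f_0 + b_0 + e_0 = s$ forces $e_0 \geq 2$ before the sim step, so $e_0 \geq 1$ survives. If a promotion is triggered, the induction hypothesis combined with $f_0 + 1 = s$ pins the pre-sim state down to $(f_0, b_0, e_0) = (s - 1, 0, 1)$; the new shadow fills the one remaining truly-empty bucket, promotion then lifts all $s$ shadows to level $1$, and in particular the bucket that just held the bare shadow of $v$ becomes truly empty, restoring $e_0 \geq 1$. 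The subsequent move step can only increase $e_0$ further.

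The main obstacle is a small but essential bookkeeping claim needed in the promotion case: I must know that right before a triggered promotion, every one of the $s - 1$ pre-existing simulation-full buckets holds a real-without-shadow vertex rather than a bare shadow. I would justify this with a side induction showing that at the end of every update, every shadow currently on level $0$ is accompanied by its real vertex, using that a no-promotion insertion's move step immediately converts its newly placed bare shadow into a real-without-shadow (by pulling $v$ in), while a promotion strips every shadow off level $0$ at once. Once this auxiliary invariant is in hand, the entire case analysis reduces to routine counting.
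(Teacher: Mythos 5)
Your proof is correct, but it takes a considerably more laborious route than the paper's. The paper's argument is a two-step observation: the amortized space invariant, applied to the simulated amortized state, guarantees that after the simulation step there is a level-$0$ bucket holding no shadow and no real-without-shadow vertex; since level-$0$ buckets have capacity one, such a bucket is either truly empty (done) or holds exactly one real-with-shadow vertex, in which case the level-$0$ part of the move step---which always evicts one vertex from some bucket containing only real-with-shadow vertices when such a bucket exists---produces a truly empty bucket. Your explicit induction with the counters $(f_0,b_0,e_0)$ essentially re-derives this: the bound $f_0\le s-1$ is the amortized space invariant in disguise, and your case analysis recovers what the paper gets in one line by invoking that invariant after the simulation step. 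One specific point: the auxiliary invariant you flag as ``essential''---that right before a triggered promotion every pre-existing simulation-full bucket holds a real-without-shadow vertex rather than a bare shadow---is in fact not needed. A bare-shadow bucket turns truly empty when the promotion lifts its shadow to level~$1$, which can only increase $e_0$; whether the $s-1$ pre-existing $f_0$-buckets become $b_0$-buckets or truly-empty buckets after the promotion is immaterial to the conclusion $e_0\ge1$, since the bucket that held $sh(v)$ is already guaranteed to become truly empty. So the side induction can simply be dropped. What your version buys is a self-contained accounting that does not lean on the phrase ``the lemma is true for the amortized version,'' but the price is a longer argument with cases that the paper's formulation sidesteps entirely.
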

\begin{proof}
 Since the lemma is true for the amortized version by the space invariant, we know that after the simulation step there is at least one bucket on level $0$ that is either empty, in which case we are done, or contains a real vertex with a shadow.
 In this case, the move step will empty one such bucket.
\end{proof}

\begin{lemma}
\label{lem:bound-vertices-in-buckets}
 Let $t_i$ be the number of updates since the last time level $i$ was full, or the last reset, whichever is more recent.
 Then there is a bucket on level $i + 1$ that does not contain any shadow vertices and contains at most $\max(0,s^{i+1} - t_i)$ real vertices, each with a shadow.
\end{lemma}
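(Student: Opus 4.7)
The plan is to prove this by induction on $t_i$, tracking the set of buckets on level $i+1$ that are \emph{special}---those containing no shadow vertices and no real vertices without a shadow. The lemma's claim is then precisely that the smallest such bucket has at most $\max(0, s^{i+1} - t_i)$ real vertices (each of which necessarily has a shadow, since the bucket contains no real-without-shadow vertices).

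The key structural observation I would establish first is that during the interval $(T_0, T]$ (where $T_0$ is the most recent level-$i$ full time or reset), no promotion affects level $i+1$. A promotion \emph{into} level $i+1$ requires level $i$ to fill, but $T_0$ is the most recent such event; and a promotion \emph{out of} level $i+1$ requires level $i+1$ to fill, which in the small-buckets algorithm can only happen as a cascade triggered by level $i$ filling. Consequently no shadow is added to any bucket on level $i+1$ during this interval, and any bucket special at the end of update $T_0$ stays special thereafter: its shadow count cannot grow, and its real-without-shadow count cannot grow either, since a level-$i$ move step only delivers a real vertex to the bucket in which its shadow already sits.

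For the base case ($t_i = 0$), the amortized algorithm's space invariant guarantees at least one bucket on level $i+1$ that is empty in the amortized sense at the end of the simulation step of update $T_0$; in the de-amortized model this is exactly a special bucket. Since buckets on level $i+1$ have capacity $s^{i+1}$, it trivially contains at most $s^{i+1}$ real-with-shadow vertices, which matches the bound. For the inductive step, if the hypothesis already delivers a special bucket with zero real-with-shadow vertices then the bound is already zero and stays so. Otherwise, the move step for level $i+1$ at update $T_0+t_i+1$ selects the smallest nonempty bucket containing only real-with-shadow vertices, which has at most as many vertices as the bucket from the hypothesis (so at most $s^{i+1}-t_i$); after removing one vertex it contains at most $s^{i+1}-(t_i+1)$ real-with-shadow vertices and remains special by the structural observation.

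The main obstacle is the cascade of promotions the simulation step can trigger at time $T_0$ itself: filling level $i$ may propagate upward and substantially reshape level $i+1$. I handle this by anchoring the base case \emph{at the end} of the simulation step of update $T_0$, by which time the amortized space invariant has been reestablished on level $i+1$, so the monotone analysis above applies cleanly throughout $(T_0, T]$. A reset occurring inside our interval is ruled out by the definition of $T_0$, which would simply be updated to the reset time.
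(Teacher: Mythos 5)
Your proof is correct and follows essentially the same route as the paper: induction on $t_i$, with the base case anchored in the amortized space invariant and the inductive step combining the facts that no shadow vertices are created on level $i+1$ while level $i$ has not filled and that each move step removes one vertex from the smallest shadow-free bucket on level $i+1$. Your extra remarks (monotonicity of ``special'' buckets and anchoring at the end of the simulation step) merely make explicit what the paper's proof leaves implicit.
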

\begin{proof}
 We prove this by induction on $t_i$.
 The base case $t_i = 0$ follows from the space invariant of the amortized version, since a bucket that is empty in the amortized version will only contain real vertices with a shadow, and each bucket has capacity $s^{i+1}$, which cannot decrease during resets.

 For the inductive step $t_i > 0$, we know that before this update level $i + 1$ had a bucket without shadow vertices that was either completely empty (if $t_i > s^{i+1}$), or had at most $s^{i+1} - (t_i - 1)$ real vertices, each with a shadow.
 If the bucket was already empty, we are done.
 Otherwise, note that during the simulation step, the only time new vertices are created on level $i + 1$ is when level $i$ fills up, which did not happen, as $t_i > 0$.
 During a move step, we move and recolor one vertex from a bucket without shadow vertices and with the least number of real vertices, each with a shadow.
 Therefore, there must now be a bucket without shadow vertices and with at most $s^{i+1} - t_i$ real vertices, each with a shadow.
\end{proof}

\begin{lemma}
\label{lem:fill-takes-long}
 At least $s^{i+1}$ updates are required for level $i$ to fill up again after a reset or after it has filled up.
\end{lemma}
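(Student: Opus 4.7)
The plan is to prove the lemma by induction on $i$, exploiting the recursive structure of the small-buckets scheme: buckets on level $i$ are populated only by promotions from level $i-1$, and a promotion happens only when the feeding level itself fills up.

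For the base case $i=0$, I would observe that level $0$ consists of $s$ buckets of capacity $s^0=1$, and is logically empty (in the simulated, amortized view) immediately after the reference event---either a reset, which moves every shadow into the reset bucket, or a fill of level $0$, which promotes all its shadows up to level $1$. Each update contributes at most one new shadow to level $0$: vertex deletions and edge operations between differently colored endpoints add none, and the remaining cases (vertex insertion, same-color edge insertion) reduce to a single insertion creating one shadow. Hence at least $s = s^{0+1}$ updates must elapse before the $s$ buckets of level $0$ can each hold a shadow.

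For the inductive step, I would assume the claim holds for level $i-1$. A bucket on level $i$ can transition from empty to non-empty only through a promotion from level $i-1$, and such a promotion occurs precisely when level $i-1$ becomes full. Crucially, every promotion of level $i-1$ to level $i$ empties level $i-1$, so the next promotion cannot take place until level $i-1$ has been refilled from scratch, which by the inductive hypothesis requires at least $s^i$ additional updates. At the start of the counting interval level $i-1$ is also empty---after a reset everything has been moved into the secondary reset bucket, and after level $i$'s previous fill the triggering promotion has just emptied level $i-1$---so even the first promotion to level $i$ is subject to the same $\geq s^i$ delay. Since all $s$ buckets on level $i$ must become non-empty in order for level $i$ to be full, we need $s$ promotions from level $i-1$, for a total of at least $s \cdot s^i = s^{i+1}$ updates.

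The only subtleties to verify are that cascading promotions and deletions cannot shortcut this counting. A cascade that extends above level $i$ only helps the bound, because it re-empties level $i$ and restarts the count from zero; deletions may remove shadows but never create them, so they likewise can only delay a level from filling. With these observations, the induction closes and yields the claimed bound.
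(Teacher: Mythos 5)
Your proof is correct and follows essentially the same route as the paper: induction on $i$, using that each update adds at most one vertex to level $0$ and that a bucket on level $i>0$ only becomes non-empty when level $i-1$ fills up, which by the inductive hypothesis happens at most once every $s^i$ updates and fills a single one of the $s$ buckets. Your extra checks (level $i-1$ being empty at the start of the interval, cascades, deletions) are sound elaborations of the same argument.
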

\begin{proof}
 Note that levels only fill up during the simulation step.
 Therefore this is a property of the amortized version of the algorithm.
 We prove the lemma by induction on $i$.
 Since each update creates at most one vertex on level $0$, and there are $s$ buckets, the lemma holds for level $0$.
 Vertices on level $i > 0$ are only created when level $i - 1$ fills up.
 By induction, this happens at most every $s^i$ updates, and each occurrence creates these vertices in only one bucket.
 Since there are $s$ buckets on level $i$, it takes at least $s^{i + 1}$ updates for it to fill up.
\end{proof}

By combining Lemmas~\ref{lem:bound-vertices-in-buckets}~and~\ref{lem:fill-takes-long} we get the following corollary.

\begin{corollary}
\label{cor:mid-buckets-empty}
 When level $i$ fills up, there is an empty bucket on level $i + 1$ (for $0 \leq i < d - 1$).
\end{corollary}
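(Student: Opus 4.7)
The plan is to combine the two preceding lemmas directly, with only a small amount of care about when $t_i$ is evaluated. I would proceed as follows.

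First, I would fix the moment of interest: some update $u$ during whose simulation step level $i$ fills up, and I want to exhibit an empty bucket on level $i+1$ into which the algorithm can push the contents of level $i$. Let $u'$ denote the most recent earlier event at which level $i$ was full, or the most recent reset if that is more recent (and set $u' = 0$ if neither has happened). By the definition in Lemma~\ref{lem:bound-vertices-in-buckets}, $t_i$ is the number of updates between $u'$ and the current update.

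Next, I would invoke Lemma~\ref{lem:fill-takes-long} to lower-bound $t_i$. That lemma says at least $s^{i+1}$ updates are needed for level $i$ to fill up again after $u'$, so by the time level $i$ fills up during update $u$ we have $t_i \ge s^{i+1}$. Therefore $\max(0, s^{i+1} - t_i) = 0$.

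Finally, I would apply Lemma~\ref{lem:bound-vertices-in-buckets} with this value of $t_i$: there exists a bucket on level $i+1$ that contains no shadow vertices and at most $0$ real vertices (each with a shadow). That bucket is, by definition, empty, which is exactly what the corollary asserts (for any $0 \le i < d-1$, since levels up to $d-1$ exist).

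I don't anticipate a real obstacle here; the corollary is essentially a bookkeeping consequence of the two lemmas. The only subtlety worth spelling out is to be consistent about the time at which $t_i$ is read off. One must ensure that the ``update $u$ during which level $i$ fills up'' is counted so that Lemma~\ref{lem:fill-takes-long} yields $t_i \ge s^{i+1}$ at precisely the moment Lemma~\ref{lem:bound-vertices-in-buckets} is applied; both lemmas are stated about the state just after (respectively, in terms of updates up to) the current operation, so they match up without adjustment.
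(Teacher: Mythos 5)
Your argument is correct and is exactly the derivation the paper has in mind: the paper states the corollary as an immediate consequence of Lemmas~\ref{lem:bound-vertices-in-buckets} and~\ref{lem:fill-takes-long} and gives no further justification, and your three-step chain (fix the moment of fill-up, use Lemma~\ref{lem:fill-takes-long} to get $t_i \ge s^{i+1}$, plug into Lemma~\ref{lem:bound-vertices-in-buckets} to drive the real-vertex count to $0$) is the bookkeeping the paper is implicitly relying on. Your closing remark about being consistent on when $t_i$ is read off is the right caveat to flag, and it resolves as you say.
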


\begin{lemma}
\label{lem:resets-empty}
 When a reset happens, the secondary reset bucket is empty.
\end{lemma}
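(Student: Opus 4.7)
The plan is to prove the lemma by induction on the number of resets that have occurred so far, together with a counting argument based on Lemma~\ref{lem:fill-takes-long}. The base case is immediate: initially all vertices sit in the primary reset bucket as real vertices without shadows, and the secondary reset bucket is empty, so the statement holds before the first reset.

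For the inductive step, I would first bound the number of real vertices that end up in the (new) secondary reset bucket immediately after a reset $R_k$ completes. At the moment of $R_k$, the primary and secondary reset buckets swap roles. By the induction hypothesis, the old secondary is empty just before $R_k$; during the simulation step of $R_k$ it receives all shadow vertices (both pre-existing and freshly created), and after the role swap it becomes the new primary. The old primary, meanwhile, contained only shadow vertices plus real vertices without shadows; the shadows move out, and every previously unshadowed real vertex gets a new shadow created for it (placed in what becomes the new primary). Hence the new secondary reset bucket consists exactly of those real vertices that were unshadowed in the old primary, each of which now has a shadow in the new primary. Since the total number of vertices in the graph at the time of $R_k$ is at most $\NR$, the new secondary has at most $\NR$ real vertices.

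Next, I would observe that outside of a reset, no operation inserts a real vertex into the secondary reset bucket: insertions create a new vertex (and its shadow) on level~$0$, and deletions only remove vertices. Therefore, between $R_k$ and the next reset $R_{k+1}$, the secondary reset bucket can only shrink. Each update's move step concludes by transferring one vertex from the secondary reset bucket to the primary (whenever the secondary is non-empty). By Lemma~\ref{lem:fill-takes-long} applied to level $d-1$, at least $s^d$ updates elapse between $R_k$ and $R_{k+1}$. Since $s^d = \lceil \NR^{1/d}\rceil^d \geq \NR$, and the new secondary starts with at most $\NR$ vertices, it is empty by the time $R_{k+1}$ is triggered, closing the induction.

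The main obstacle I expect is not a technical one but a bookkeeping one: one has to track carefully, during the simulation step of a reset, which shadows and real vertices move where so that the bound of $\NR$ on the new secondary is justified. Once this is pinned down, and one confirms that only resets can populate the secondary reset bucket, the counting argument via Lemma~\ref{lem:fill-takes-long} is clean and essentially immediate.
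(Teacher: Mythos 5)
Your proof is correct and follows essentially the same argument as the paper's: induct on the number of resets, bound the post-reset secondary by $\NR$, invoke Lemma~\ref{lem:fill-takes-long} to get at least $s^d \geq \NR$ updates before the next reset, and note that each move step removes one vertex from the secondary. The extra bookkeeping you spell out about which vertices land in the new secondary after the role swap is sound but is left implicit in the paper; the core counting argument is identical.
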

\begin{proof}
 Initially, the entire point set is contained in the primary reset bucket, and the secondary one is empty.
 Since only a reset can create shadow vertices in a reset bucket, the lemma holds at the first reset.
 When a subsequent reset happens, we have performed at least $s^d \geq \NR$ updates in order to fill up level $d-1$ by Lemma~\ref{lem:fill-takes-long}.
 Since we move one vertex from the secondary reset bucket to the primary reset bucket during each move step, and the secondary reset bucket contains at most $\NR$ vertices, this bucket will be empty when the next reset happens.
\end{proof}

This shows that an empty bucket is available when needed, completing the correctness proof.
 The additional reset bucket increases the number of colors we use by $\cR$ compared to the amortized algorithm, giving the following result.

\begin{theorem}
For any integer $d>0$, the de-amortized small-buckets algorithm is an $O(d\N^{1/d})$-competitive recoloring algorithm that uses at most $d + 2$ vertex recolorings per update.
\end{theorem}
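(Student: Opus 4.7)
The plan is to split the theorem into three claims and discharge them in turn, leaning heavily on the lemmas already established in this subsection.

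First, the recoloring count is immediate from the description of the move step: we perform one recoloring for the inserted vertex $v$ (moving it onto $sh(v)$), one recoloring per level (moving a single vertex from the bucket on that level containing only real-with-shadow vertices that has the fewest such vertices), and one recoloring from the secondary reset bucket. This totals at most $d + 2$ real-vertex recolorings per update, independent of what the simulation step does.

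Second, for the competitive ratio, I would observe that the only structural difference from the amortized version is the extra reset bucket, so the number of non-empty buckets in use is bounded by $2 + d(s - 1)$. Each bucket draws from its own private palette of at most $\cR$ colors, and $s = \lceil \NR^{1/d}\rceil = O(\N^{1/d})$. Hence the algorithm uses $O(d\N^{1/d}\cR)$ colors, which is $O(d\N^{1/d})$-competitive.

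Third, correctness splits into (i) the simulation step can always be carried out and (ii) the resulting coloring is proper. For (i), the simulation demands an empty bucket in exactly three places: on level $0$ when a new vertex is inserted, on level $i+1$ when level $i$ fills up, and on the secondary reset bucket when a reset is triggered. These three cases are handled by Lemma~\ref{lem:0-bucket-empty}, Corollary~\ref{cor:mid-buckets-empty}, and Lemma~\ref{lem:resets-empty}, respectively. For (ii), once the simulation is legitimate the shadow vertices together with the real vertices that have no shadow exactly mirror the state of the amortized algorithm, whose coloring is proper by Theorem~\ref{thm:ub-small}; the remaining real-with-shadow vertices retain the colors of the bucket that holds them, which comes from a palette disjoint from every other live bucket's palette, so no monochromatic edge is possible.

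The main obstacle, to the extent there is one, sits in part (ii) of correctness rather than in the counting, since Lemmas~\ref{lem:0-bucket-empty}--\ref{lem:fill-takes-long} and Corollary~\ref{cor:mid-buckets-empty} already do the heavy lifting. What I would need to articulate carefully is that across a role-switch of the two reset buckets the palettes of the previous primary and of the new primary are disjoint, so that a real-with-shadow vertex still carrying its old color never conflicts with a shadow or real vertex colored from the new palette or from any level bucket. Once this disjointness is spelled out, combining the three parts yields the stated bounds.
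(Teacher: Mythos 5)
Your proposal is correct and follows essentially the same route as the paper: the $d+2$ bound is read off the move step, the competitive ratio comes from the one additional reset bucket contributing one more private palette of at most $\cR$ colors, and correctness rests on Lemma~\ref{lem:0-bucket-empty}, Corollary~\ref{cor:mid-buckets-empty}, and Lemma~\ref{lem:resets-empty}, exactly as in the paper, with properness following from the disjoint-palette setup already fixed at the start of Section~\ref{section:Upper bounds}. Two minor remarks that do not affect the result: your count of $2+d(s-1)$ occupied buckets is slightly optimistic, since a bucket that is empty in the simulation may still hold real vertices with shadows (so only $ds+2$ buckets, hence $O(d\N^{1/d})\cdot\cR$ colors, is immediate), and the palette disjointness across the reset-bucket role switch that you flag as the main obstacle is automatic, because the primary and secondary reset buckets are two distinct buckets, each with its own color set.
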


\subsection{Big-buckets algorithm}

As for the small-buckets algorithm, the de-amortized big-buckets algorithm splits the work into a simulation step, in which we move and recolor shadow vertices according to the amortized algorithm, and a move step in which we move and recolor a small number of real vertices to their shadows. The main difference is that we double all buckets, instead of just the reset bucket. There is a primary and secondary version of every bucket, each with its own set of colours. The primary buckets contain shadow vertices and real vertices without a shadow, while the secondary buckets contain only real vertices with a shadow. The simulation step acts only on the primary buckets, while the move step takes real vertices from the secondary buckets to their shadows in the primary buckets. The primary and secondary bucket on a level switch roles when new shadow vertices are added to the level. We prove that this happens only when the secondary bucket is empty.

\captionsetup[subfloat]{justification=centering}
\begin{figure}[htb]
 \centering
 \subfloat[]{\includegraphics{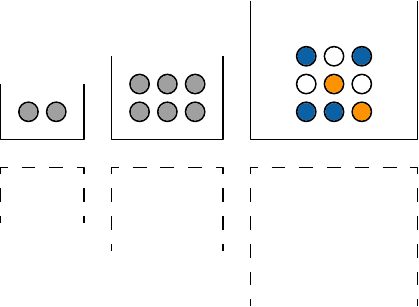}}
 \hspace{0.015\textwidth}
 \subfloat[]{\includegraphics{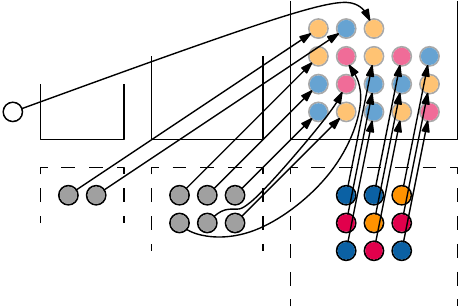}}
 \hspace{0.025\textwidth}
 \subfloat[]{\includegraphics{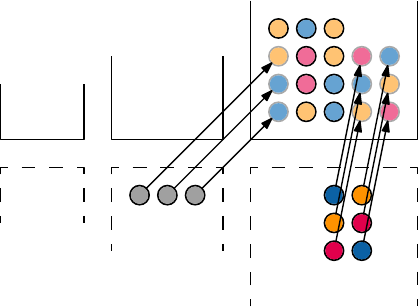}}
 \caption{One vertex insertion in the de-amortized big-buckets algorithm. (a) Before the update. (b) During the simulation step the new vertex causes the first two levels to fill up, creating shadow vertices in the secondary third bucket and swapping the roles of these bucket pairs. (c) During the move step we move and recolor up to $s$ vertices from each level.}
 \label{fig:big-buckets-deamortized-link}
\end{figure}

Recall that the high point invariant states that bucket $i$ contains at most $s^{i+1} - s^i$ vertices. When a vertex is inserted, the amortized big-bucket algorithm tries to place it in bucket $0$. If this violates the high point invariant, the bucket is emptied into the bucket on the next level, and so on, until we reach a level where the high point invariant is not violated. If such a level does not exist, we trigger a reset.

The de-amortized algorithm simulates this as follows. During the simulation step, we find the first level $i$ where we can insert the new vertex and all vertices on lower levels without violating the high point invariant. We give all these vertices, along with the vertices in the primary bucket of level $i$, a shadow in the secondary bucket of level $i$ and compute a coloring for them, see Fig.~\ref{fig:big-buckets-deamortized-link}. We then switch the roles of the primary and secondary buckets for all levels involved. During the move step, we move and recolor $v$ to its shadow. In addition, we move and recolor up to $s$ vertices from each level's secondary bucket and the secondary reset bucket to their shadows.

If we cannot find a level to insert the new vertex without violating the high point invariant, we reset. This involves discarding all current shadow vertices and creating a new shadow vertex in the secondary reset bucket for each real vertex, computing a coloring for these vertices, and switching the primary and secondary reset buckets. We also recompute $s$ and increase the bucket sizes if necessary.
\medskip

\noindent\textbf{Analysis} 
For correctness, the only thing we need to show is that, when we place shadow vertices in a secondary bucket on level $i$, that bucket is empty. A similar argument shows that the secondary reset bucket is empty whenever the high point invariant would fail for level $d - 1$.

\begin{lemma}
 When an update causes us to place shadow vertices in the secondary bucket of level $i$, that bucket is empty.
\end{lemma}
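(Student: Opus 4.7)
The plan is an inductive argument tracking two quantities per level $i$: the maximum number of real vertices that can be dumped into the secondary bucket of level $i$ at a single simulation step, and the minimum number of updates between consecutive simulation steps that affect it. Since the move step drains that bucket at a rate of at most $s$ real vertices per update, showing that the drain outpaces the inflow will give the lemma.

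The inflow bound is immediate. Whenever the simulation step designates some level $j \geq i$ as the first level whose high point invariant is satisfied, the primary bucket of level $i$ contains at most $s^{i+1} - s^i$ vertices (shadows plus real-without-shadow) by the high point invariant at level $i$ \emph{before} the update. After the role swap at level $i$, the only items that remain physically in the new secondary are real vertices, each carrying a newly created shadow in the new primary at level $j$. Hence the secondary of level $i$ gains at most $s^{i+1} - s^i$ real vertices per event that affects it.

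The frequency bound is the heart of the argument and plays the role of a big-buckets analogue of Lemma~\ref{lem:fill-takes-long}. I claim that between two consecutive cascades whose destination is at level $i$ or higher, at least $s^i - s^{i-1} + 1$ updates must occur. The proof will track the total number $\Sigma_i$ of vertices in the amortized buckets $0,1,\ldots,i-1$. Immediately after a cascade reaching level $\geq i$, every such bucket is empty, so $\Sigma_i = 0$. Each subsequent insertion raises $\Sigma_i$ by exactly $1$, each deletion can only lower it, and any cascade that stays strictly below level $i$ merely redistributes vertices among those buckets and hence leaves $\Sigma_i$ unchanged. A new cascade can reach level $i$ only once bucket $i-1$ exceeds its high point $s^i - s^{i-1}$, and therefore only once $\Sigma_i \geq s^i - s^{i-1} + 1$, giving the claimed gap.

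Putting the two bounds together: during the at least $s^i - s^{i-1} + 1$ updates preceding the next event affecting level $i$, the move step removes at least $s(s^i - s^{i-1} + 1) > s^{i+1} - s^i$ real vertices from the secondary of level $i$, which meets the inflow from the previous event. So the secondary is empty by the time the next simulation step targets it. The base case $i = 0$ requires separate handling because events at level $\geq 0$ occur at every insertion; but the inflow there is only $s - 1$, and the move step of the \emph{same} update removes up to $s$ vertices, so the secondary of level $0$ is empty by the start of the next simulation step no matter how frequent events are. The main obstacle I anticipate is making the $\Sigma_i$ monovariant completely airtight when a single insertion propagates through many levels in one step, and confirming that the role swaps at intermediate levels $<i$ during a pass-through cascade are safely accommodated by the same invariant applied at those lower levels.
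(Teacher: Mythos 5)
Your proof is correct and takes essentially the same route as the paper: bound the content of the secondary bucket at level $i$ by $s^{i+1}-s^i$ via the high point invariant, show that at least $s^i - s^{i-1} + 1$ updates separate consecutive fill events at level $i$, and combine this with the drain rate of $s$ real vertices per move step. Your $\Sigma_i$ potential is a somewhat cleaner way to formalize the gap bound than the paper's own phrasing, which instead counts real vertices directly and invokes a level-by-level induction to argue those vertices carry no shadows; because you define fill events as cascades reaching level $\geq i$ (not exactly $i$), the pass-through/role-swap concern you flag at the end is in fact already covered by the very same argument applied at each intermediate level.
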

\begin{proof}
 Since the buckets on level $0$ contain fewer than $s$ vertices, the move step after each update empties the secondary bucket. For $i > 0$, recall that we only place shadow vertices here if the high point invariant would have been violated at level $i - 1$, which means that there are at least $s^{i} - s^{i-1} + 1$ real vertices in the levels before $i$. Moreover, none of these vertices have a shadow, since each level's secondary bucket is empty by induction, and the primary bucket contains only real vertices without a shadow.
 Recall that, whenever we place shadow vertices in the secondary bucket of level $i$, we do this for all of the vertices on the lower levels. Thus, these $s^{i} - s^{i-1} + 1$ vertices were inserted after the secondary bucket was last filled. Since each move step moves $s$ vertices from the secondary bucket into the primary bucket, and since the buckets on level $i$ contain no more than $s^{i+1} - s^i$ vertices by the high point invariant, the secondary bucket will be empty before the current insertion.
\end{proof}

The number of colors used is doubled compared to the amortized version, but the number of recolorings per operation is the same: $1$ for the inserted vertex, at most $s - 1$ for level $0$, and at most $s$ for every other level and the reset bucket.

\begin{theorem}
 For any integer $d>0$, the de-amortized big-buckets algorithm is an $O(d)$-competitive recoloring algorithm that uses at most $(d + 1)s = O(dN^{1/d})$ vertex recolorings per update.
\end{theorem}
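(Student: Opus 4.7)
The plan is to verify the two claims of the theorem, competitiveness and per-update recoloring count, by reducing each to facts already established, namely the correctness lemma just proved and the behavior of the amortized big-buckets algorithm analyzed in Theorem~\ref{thm:ub-big}.

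First I would handle the color bound. In the amortized big-buckets algorithm we showed that the number of colors in use at any moment is at most $(d+1)\cR$, obtained by assigning a disjoint set of $\cR$ colors to each of the $d$ level buckets plus the reset bucket. The only structural change in the de-amortized version is that each of these $d+1$ buckets is replaced by a primary-secondary pair, each pair using its own two disjoint sets of $\cR$ colors. So the total number of colors is at most $2(d+1)\cR$, which is $O(d)\cdot\cR$, giving $O(d)$-competitiveness.

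Next I would bound the number of recolorings performed during a single update. Recolorings happen only during the move step (the simulation step operates on shadow vertices only and is charged no cost). A move step performs exactly these operations: move the newly inserted vertex $v$ to its shadow (cost $1$); move up to $s-1$ vertices from the secondary bucket of level $0$ to their shadows (the level-$0$ bucket never exceeds $s-1$ real vertices by the high point invariant); move up to $s$ vertices from each of the secondary buckets on levels $1,\dots,d-1$; and move up to $s$ vertices from the secondary reset bucket. Summing gives $1+(s-1)+(d-1)s+s=(d+1)s$, and since $s=\lceil \NR^{1/d}\rceil=O(\N^{1/d})$, this is $O(d\N^{1/d})$ per update in the worst case.

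Finally, correctness of the de-amortized scheme, i.e. that the primary coloring on real vertices is always proper, relies on the invariant that whenever the simulation step places shadow vertices into a secondary bucket (including the secondary reset bucket) that bucket is indeed empty, so there is no conflict between the shadow vertices created there and any residual real vertices. This is exactly the content of the lemma just proved for internal levels, together with the analogous argument sketched after it for the secondary reset bucket; once those two facts are in hand, the swapping of primary/secondary roles maintains a proper coloring after every move step. The only part where care is needed, and the place I expect to be the main obstacle when writing out full detail, is reconciling the bounds on the number of vertices accumulated in each secondary bucket with the rate at which the move step drains them (namely $s$ per update), to conclude that $(d+1)s$ recolorings per update really do suffice to keep up with the amortized algorithm's work. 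Fortunately this reconciliation is precisely what the high point invariant and the lemma give us, so the argument is essentially a bookkeeping assembly of results already established.
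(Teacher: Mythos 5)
Your proof is correct and matches the paper's argument: the color count is doubled to $2(d+1)\cR$ because each of the $d+1$ buckets is split into a primary/secondary pair, and the per-update recoloring count is $1 + (s-1) + (d-1)s + s = (d+1)s$ from the move step, with correctness resting on the preceding lemma that a secondary bucket (or secondary reset bucket) is always empty when the simulation step needs to populate it. The reconciliation you flag at the end as the "main obstacle" is indeed exactly what that lemma discharges, so there is no gap.
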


\section{Lower bound}\label{section:Lower bound}

In this section we prove a lower bound on the amortized number of recolorings for any algorithm that maintains a $c$-coloring of a 2-colorable graph, for any constant $c \geq 2$.
We say that a vertex is \emph{$c$-colored} if it has a color in $[c] = \{1, \ldots, c\}$.
For simplicity of description, we assume that a recoloring algorithm only recolors vertices when an edge is inserted and not when an edge is deleted, as edge deletions do not invalidate the coloring.
This assumption causes no loss of generality, as we can delay the recolorings an algorithm would perform in response to an edge deletion until the next edge insertion.

The proof for the lower bound consists of several parts.
We begin with a specific initial configuration and present a strategy for an adversary that constructs a large configuration with a specific colouring and then repeatedly performs costly operations in this configuration.
In light of this strategy, a recoloring algorithm has a few choices: it can allow the configuration to be built and perform the recolorings required, 
it can destroy the configuration by recoloring parts of it instead of performing the operations, or it can prevent the configuration from being built in the first place by recoloring parts of the building blocks.
We show that all these options require an amortized large number of recolorings.

\subsection{Maintaining a 3-coloring}

To make the general lower bound easier to understand, we first show that to maintain a 3-coloring, we need at least $\Omega(n^{1/3})$ recolorings on average per update. 

\begin{lemma}
\label{lem:3-coloring-lb}
For any sufficiently large $n$ and any $m \geq 2n^{1/3}$, there exists a forest with $n$ vertices, such that for any recoloring algorithm $A$, there exists a sequence of $m$ updates that forces $A$ to perform $\Omega(m \cdot n^{1/3})$ vertex recolorings to maintain a 3-coloring throughout this sequence.
\end{lemma}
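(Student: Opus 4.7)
The plan is to construct a forest of many small stars and exploit a pigeonhole argument on the algorithm's 3-coloring: by identifying three stars that share a common coloring scheme, the adversary can close them into a triangle and force the algorithm into many recolorings. Set $t = \lceil n^{1/3} \rceil$ and let $F$ consist of $\lfloor n/t \rfloor$ disjoint stars, each having $t-1$ leaves (padded with isolated vertices if needed to reach exactly $n$). After the algorithm fixes its initial 3-coloring, every star is characterized by its scheme $(x, a, b)$, where $x \in \{1, 2, 3\}$ is the color of the center and $(a, b)$ is the split of the $t-1$ leaves between the two non-center colors. Since there are at most $3t = O(n^{1/3})$ schemes and $\Omega(n^{2/3})$ stars, pigeonhole guarantees a cluster of $\Omega(n^{1/3})$ same-scheme stars; because the adversary alters only three stars per round, this guarantee persists across the entire update sequence.

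The adversary would then operate in rounds of six updates each. In each round, pick three stars $S_1, S_2, S_3$ of a common scheme $(x, a, b)$ with centers $u_1, u_2, u_3$; insert the edges $(u_1, u_2)$ and $(u_2, u_3)$ to chain the centers into a path; insert $(u_1, u_3)$ to close the path into a triangle; then delete all three inserted edges to restore the forest. In the coloring after the three insertions, the triangle $u_1 u_2 u_3$ must receive three distinct colors, so two of the centers must change from $x$ to the remaining two colors $y$ and $z$. For any such assignment, the $a$ leaves originally colored $y$ attached to the center now colored $y$, and the $b$ leaves originally colored $z$ attached to the center now colored $z$, must all be recolored. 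Hence the coloring after the round differs from the coloring before the round in at least $2 + a + b = t + 1 = \Omega(n^{1/3})$ vertices. Because each recoloring changes exactly one vertex, the number of recolorings in the round is at least $t+1$, and the three deletions contribute nothing. Amortizing over the six updates of a round and summing over $\lfloor m/6 \rfloor$ rounds yields $\Omega(m \cdot n^{1/3})$ recolorings in total.

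The main technical obstacle is making this per-round lower bound rigorous against an arbitrary algorithm that may recolor vertices multiple times during the round, spread recolorings across the two chaining inserts, or preemptively adjust leaves. The argument sidesteps these maneuvers by relying on a Hamming-distance comparison between the colorings immediately before and immediately after the three insertions: every valid 3-coloring of the triangle-plus-pendants configuration must differ from the starting all-$x$-center configuration in at least $t+1$ positions, and each recoloring contributes at most one position-difference. A secondary detail to verify is that pigeonhole keeps applying after many rounds; since at most $t + O(1)$ vertex colors change per round, spread across only three stars, a cluster of $\Omega(n^{1/3})$ same-scheme stars can always be found among the $\Omega(n^{2/3})$ stars in $F$, so the strategy can be iterated for the full sequence of $m \geq 2n^{1/3}$ updates.
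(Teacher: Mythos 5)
Your per-round accounting is internally consistent: the pigeonhole over $O(n^{1/3})$ coloring schemes, the Hamming-distance bound of $t+1$ for the triangle round, and the amortization over rounds are all correct. However, there is a fundamental gap: closing the path $u_1u_2u_3$ into a triangle destroys the forest structure and, more importantly, makes the graph non-bipartite. The entire lower-bound section (and the abstract) is explicitly about maintaining a $c$-coloring of a \emph{2-colorable} graph, and this lemma is the $c=3$ warm-up for Theorem~\ref{thm:lb-general}, whose construction keeps the graph a forest at all times. A lower bound that only applies when the adversary is allowed to introduce odd cycles is strictly weaker: it says nothing about algorithms that exploit the promise that the input stays a forest (or 2-colorable), which is exactly the regime the paper targets. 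In short, you have proven a lower bound for a different, easier adversary model.

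The paper's proof gets the same effect as your triangle without ever leaving the class of forests by going up one more level in a hierarchy. It uses larger stars (1-trees with about $n^{2/3}$ vertices) and first merges $\Theta(n^{1/3})$ same-colored 1-trees into a ``2-tree'' by attaching their roots to one distinguished root $r$. The key observation is that $r$ now has many children of the 1-tree color and also many \emph{non-leaf} children (the roots of the merged 1-trees), so $r$'s color is constrained away from the 1-tree color, and recoloring $r$ forces recoloring either many of its non-leaf children or many leaves of some 1-tree. Linking the roots of two such 2-trees whose roots share a color (a ``matching link'') then forces $\Omega(n^{1/3})$ recolorings, just as your triangle does, while the graph remains a forest throughout (an edge is cut before each link to avoid cycles). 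If you want your approach to fit the paper's claim, you need a replacement for the triangle that forces the third color without creating a cycle; the 2-tree-plus-matching-link construction is precisely that replacement, and it is the idea your argument is missing.
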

\begin{proof}
Let $A$ be any recoloring algorithm that maintains a 3-coloring of a forest under updates.
We use an adversarial strategy to choose a sequence of updates on a specific forest with $n$ nodes that forces $A$ to recolor ``many'' vertices.
We start by describing the initial forest structure.

\begin{figure}[th]
\centering
\includegraphics{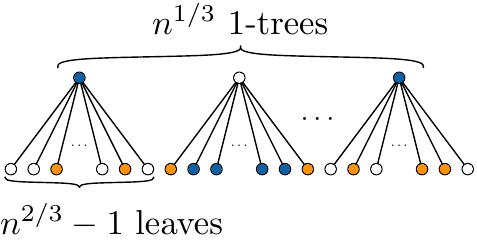} \hspace{3em} \includegraphics{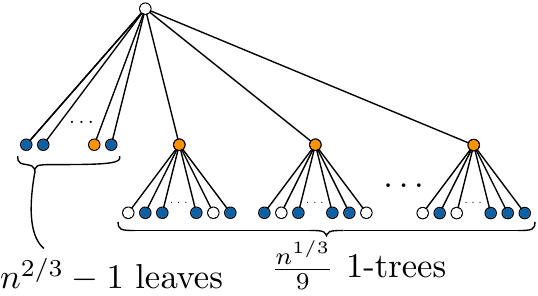}
\caption{\small (left) A 1-configuration is any forest that has many 1-trees as induced subgraphs.
(right) A 2-tree is constructed by connecting the roots of many 1-trees.}
\label{fig:1-configuration}
\end{figure}

A \emph{1-tree} is a rooted (star) tree with a distinguished vertex as its root and $n^{2/3}-1$ leaf nodes attached to it.
Initially, our forest consists of $n^{1/3}$ pairwise disjoint 1-trees, which account for all $n$ vertices in our forest.
The sequence of updates we construct never performs a cut operation among the edges of a 1-tree.
Thus, the forest remains a \emph{1-configuration}: a forest of rooted trees with the $n^{1/3}$ independent 1-trees as induced subgraphs; see Fig.~\ref{fig:1-configuration}~(left).
We require that the induced subtrees are not \emph{upside down}, that is, the root of the 1-tree should be closer to the root of the full tree than its children.
Intuitively, a 1-configuration is simply a collection of our initial 1-trees linked together into larger trees.

Let $F$ be a 1-configuration.
We assume that $A$ has already chosen an initial 3-coloring of $F$.
We assign a color to each 1-tree as follows.
Since each 1-tree is properly 3-colored, the leaves cannot have the same color as the root.
Thus, a 1-tree $T$ always has at least $\frac{n^{2/3}-1}{2}$ leaves of some color $C$, and $C$ is different from the color of the root.
We assign the color $C$ to $T$.
In this way, each 1-tree is assigned one of the three colors.
We say that a 1-tree with assigned color $C$ becomes \emph{invalid} if it has no children of color $C$ left.
Notice that to invalidate a 1-tree, algorithm $A$ needs to recolor at least $\frac{n^{2/3}-1}{2}$ of its leaves.
Since the coloring uses only three colors, there are at least $\frac{n^{1/3}}{3}$ 1-trees with the same assigned color, say $X$.
In the remainder, we focus solely on these 1-trees.

A \emph{2-tree} is a tree obtained by merging $\frac{n^{1/3}}{9}$ 1-trees with assigned color $X$, as follows.
First, we cut the edge connecting the root of each 1-tree to its parent, if it has one.
Next, we pick a distinguished 1-tree with root $r$, and connect the root of each of the other $\frac{n^{1/3}}{9}-1$ 1-trees to $r$.
In this way, we obtain a 2-tree whose root $r$ has $n^{2/3}-1$ leaf children from the 1-tree of $r$, and $\frac{n^{1/3}}{9}-1$ new children that are the roots of other 1-trees; see Fig.~\ref{fig:1-configuration}~(right) for an illustration.
This construction requires $\frac{n^{1/3}}{9}-1$ edge insertions and at most $\frac{n^{1/3}}{9}$ edge deletions (if every 1-tree root had another parent in the 1-configuration).

We build 3 such 2-trees in total. This requires at most $6 (\frac{n^{1/3}}{9}) = \frac{2n^{1/3}}{3}$ updates.
If none of our 1-trees became invalid, then since our construction involves only 1-trees with the same assigned color $X$, no 2-tree can have a root with color $X$.
Further, since the algorithm maintains a 3-coloring, there must be at least two 2-trees whose roots have the same color.
We can now perform a \emph{matching link}, by connecting the roots of these two trees by an edge (in general, we may need to perform a cut first).
To maintain a 3-coloring after a matching link, $A$ must recolor the root of one of the 2-trees and either recolor all its non-leaf children or invalidate a 1-tree.
If no 1-tree has become invalidated, this requires at least $\frac{n^{1/3}}{9}$ recolorings, and we again have two 2-trees whose roots have the same color.
Thus, we can perform another matching link between them.
We keep doing this until we either performed $\frac{n^{1/3}}{6}$ matching links, or a 1-tree is invalidated.

Therefore, after at most $n^{1/3}$ updates ($\frac{2n^{1/3}}{3}$ for the construction of the 2-trees, and $\frac{n^{1/3}}{3}$ for the matching links), we either have an invalid 1-tree, in which case $A$ recolored at least $\frac{n^{2/3} - 1}{2}$ nodes, or we performed $\frac{n^{1/3}}{6}$ matching links, which forced at least $\frac{n^{1/3}}{6} \cdot \frac{n^{1/3}}{9} = \frac{n^{2/3}}{54}$ recolorings.
In either case, we forced $A$ to perform at least $\Omega(n^{2/3})$ vertex recolorings, using at most $n^{1/3}$ updates.

Since no edge of a 1-tree was cut, we still have a valid 1-configuration, where the process can be restarted.
Consequently, for any $m \geq 2n^{1/3}$, there exists a sequence of $m$ updates that starts with a 1-configuration and forces $A$ to perform $\lfloor\frac{m}{n^{1/3}}\rfloor \Omega(n^{2/3}) = \Omega(m\cdot n^{1/3})$ vertex recolorings.
\end{proof}

\subsection{On \texorpdfstring{$k$}{k}-trees}

\begin{wrapfigure}[10]{R}{3.2in}
\vspace{-1\baselineskip}
\centering
\includegraphics{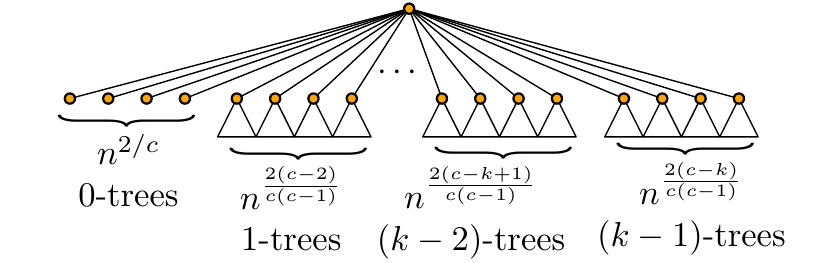}
\caption{\small A $k$-tree is constructed by connecting the roots of a large number of $(k-1)$-trees.}
\label{fig:A c-tree}
\end{wrapfigure}

We are now ready to describe a general lower bound for any number of colors $c$. 
The general approach is the same as when using 3 colors: We construct trees of height up to $c+1$, each excluding a different color for the root of the merged trees. By now connecting two such trees, we force the algorithm $A$ to recolor the desired number of vertices.

A \emph{0-tree} is a single node, and
for each $1\leq k\leq c$, a \emph{$k$-tree} is a tree obtained recursively by merging $2\cdot n^\frac{2(c-k)}{c(c-1)}$  $(k-1)$-trees as follows: 
Pick a $(k-1)$-tree and let $r$ be its root. Then, for each of the $2\cdot n^\frac{2(c-k)}{c(c-1)}-1$ remaining $(k-1)$-trees, connect their root to $r$ with an edge; see Fig.~\ref{fig:A c-tree} for an illustration.

As a result, for each $0\leq j \leq k-1$, a $k$-tree $T$ consists of a root $r$ with $2\cdot n^\frac{2(c-j-1)}{c(c-1)}-1$ $j$-trees, called the \emph{$j$-subtrees} of $T$, whose root hangs from $r$.
The root of a $j$-subtree of $T$ is called a \emph{$j$-child} of $T$.
By construction, $r$ is also the root of a $j$-tree which we call the \emph{core} $j$-tree of $T$.

Whenever a $k$-tree is constructed, it is assigned a color that is present among a ``large'' fraction of its $(k-1)$-children.
Indeed, whenever a $k$-tree is assigned a color $c_k$, we guarantee that it has at least $\left\lceil \frac{2}{c}\cdot n^\frac{2(c-k)}{c(c-1)}\right\rceil$ $(k-1)$-children of color~$c_k$.
We describe later how to choose the color that is assigned to a $k$-tree. 

We say that a $k$-tree that was assigned color $c_k$ has a \emph{color violation} if its root no longer has a $(k-1)$-child with color $c_k$. 
We say that a $k$-tree $T$ becomes \emph{invalid} if either (1) it has a color violation or (2) if a core $j$-tree of $T$ has a color violation for some $1\leq j < k$; otherwise we say that $T$ is \emph{valid}.

\begin{observation}\label{obs:Recolors to invalidate}
To obtain a color violation in a $k$-tree constructed by the above procedure, $A$ needs to recolor at least
$\left\lceil \frac{2}{c}\cdot n^\frac{2(c-k)}{c(c-1)}\right\rceil$ vertices.
\end{observation}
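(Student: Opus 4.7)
The plan is to track, at the moment the $k$-tree $T$ receives its assigned color $c_k$, the specific set of $(k-1)$-children whose roots witness that color, and then to argue that every member of this set must be individually recolored by $A$ before a color violation can occur.

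First, I would recall directly from the construction that when a color $c_k$ is assigned to $T$, we are guaranteed at least $\left\lceil \frac{2}{c}\cdot n^{\frac{2(c-k)}{c(c-1)}}\right\rceil$ $(k-1)$-children of $T$ whose own roots carry color $c_k$. Let $S$ denote this set of $(k-1)$-subtree roots; these are pairwise distinct vertices, all currently adjacent to the root $r$ of $T$.

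Next, I would unpack what a color violation at $T$ means: the root $r$ has no $(k-1)$-child of color $c_k$ in the current graph. Hence for every vertex $v \in S$, either the edge from $v$ to $r$ must be severed or $v$ must be recolored away from $c_k$. Edge deletions are performed by the adversary, not by $A$, and the observation measures only the work $A$ is forced to do; the adversary's intended strategy does not cut any of the edges that attach the $(k-1)$-children of $T$ to $r$, since doing so would itself destroy the very tree whose violation we are measuring. Consequently, each $v \in S$ must undergo at least one recoloring performed by $A$, and because the vertices of $S$ are distinct, these recolorings cannot be shared. Summing over $S$ yields the claimed lower bound of $\left\lceil \frac{2}{c}\cdot n^{\frac{2(c-k)}{c(c-1)}}\right\rceil$ recolorings.

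The only subtle point—and the one I would take most care with—is the bookkeeping separating recolorings charged to $A$ from structural changes performed by the adversary. Once it is made explicit that the adversarial strategy preserves the edges connecting $r$ to the relevant $(k-1)$-children (so their removal cannot cheaply achieve a color violation), the rest reduces to counting the witnesses in $S$, which is immediate from the construction rule for assigning $c_k$. No further calculation is needed beyond invoking the construction and the definition of color violation.
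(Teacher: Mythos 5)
Your proof is correct and takes essentially the same approach as the paper's: fix the set of $(k-1)$-children that witnessed color $c_k$ at assignment time, and observe that each must be individually recolored before a color violation can occur. The extra paragraph noting that the adversary never cuts the edges from $r$ to its $(k-1)$-children is a clarification the paper leaves implicit, not a different argument.
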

\begin{proof}
Let $T$ be a valid $k$-tree constructed by the above procedure. 
Assume that $T$ was assigned color $c_k$ and recall that by definition, $T$ had at least $\left\lceil \frac{2}{c}\cdot n^\frac{2(c-k)}{c(c-1)}\right\rceil$ $(k-1)$-children of color $c_k$ when its color was assigned. Therefore, in order for $T$ to have a color violation, $A$ needs to change the color of at least $\left\lceil \frac{2}{c}\cdot n^\frac{2(c-k)}{c(c-1)}\right\rceil$ vertices. 
\end{proof}

Notice that a valid $c$-colored $k$-tree of color $c_k$ cannot have a root with color~$c_k$. 
Formally, color $c_k$ is \emph{blocked} for the root of a $k$-tree if this root has a child with color $c_k$. 
In particular, the color assigned to a $k$-tree and the colors assigned to its core $j$-trees for $1 \leq j \leq k-1$ are blocked as long as the tree is valid. 

\subsection{On \texorpdfstring{$k$}{k}-configurations}

A $0$-configuration is a set $F_0$ of $c$-colored nodes, where $|F_0| = T_0 = \alpha n$, for some sufficiently large constant $\alpha$ which will be specified later. For $1\leq k< c$, a \emph{$k$-configuration} is a set $F_k$ of $T_k$ $k$-trees, where
$$T_k = \frac{\alpha}{(4c)^k}\cdot n^{1 - \sum_{i=1}^k \frac{2(c-i)}{c(c-1)}}.$$
Note that the trees of a $k$-configuration may be part of $m$-trees for $m > k$.
If at least $\frac{T_k}{2}$ $k$-trees in a $k$-configuration are valid, then the configuration is \emph{valid}.

For our construction, we let the initial configuration $F_0$ be an arbitrary $c$-colored $0$-configuration in which each vertex is $c$-colored.
To construct a $k$-configuration $F_k$ from a valid $(k-1)$-configuration $F_{k-1}$, consider the at least $\frac{T_{k-1}}{2}$ valid $(k-1)$-trees from $F_{k-1}$. Recall that the trees of $F_{k-1}$ may be part of larger trees, but since we consider edge deletions as ``free'' operations we can separate the trees.
Since each of these trees has a color assigned, among them at least $\frac{T_{k-1}}{2c}$ have the same color assigned to them. Let $c_{k-1}$ denote this color. 

Because each $k$-tree consists of $2\cdot n^\frac{2(c-k)}{c(c-1)}$ $(k-1)$-trees, to obtain $F_k$ we merge $\frac{T_{k-1}}{2c}$ $(k-1)$-trees of color $c_{k-1}$ into $T_k$ $k$-trees, where 
$$T_k = \frac{T_{k-1}}{2c} \cdot \frac{1}{2\cdot n^\frac{2(c-k)}{c(c-1)}} = \frac{\alpha}{(4c)^k}\cdot n^{1 - \sum_{i=1}^k \frac{2(c-i)}{c(c-1)}}.$$

Once the $k$-configuration $F_k$ is constructed, we perform a \emph{color assignment} to each $k$-tree in $F_k$ as follows: 
For a $k$-tree $\tau$ of $F_k$ whose root has \mbox{$2\cdot n^\frac{2(c-k)}{c(c-1)}-1$} $c$-colored $(k-1)$-children, we assign $\tau$ a color that is shared by at least $\left\lfloor \frac{2}{c}\cdot n^\frac{2(c-k)}{c(c-1)}-1\right\rfloor$ of these $(k-1)$-children. 
Therefore, $\tau$ has at least $\left\lfloor \frac{2}{c}\cdot n^\frac{2(c-k)}{c(c-1)}\right\rfloor$ children of its assigned color.
After these color assignments, if each $(k-1)$-tree used is valid, then each of the $T_k$ $k$-trees of $F_k$ is also valid. Thus, $F_k$ is a valid configuration.
Moreover, for $F_k$ to become invalid, $A$ would need to invalidate at least $\frac{T_k}{2}$ of its $k$-trees.

\begin{observation}\label{obs:Property of colored k-trees}
Let $\tau$ be a valid $j$-tree with color $c_j$ assigned to it.
If $r$ is the root of~$\tau$, then $r$ has at least one $(j-1)$-child with color~$c_j$.
\end{observation}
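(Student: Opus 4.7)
The plan is a direct unfolding of the two definitions given just above the statement: the color-assignment rule for $k$-trees and the notion of a color violation. First I would apply the color-assignment rule to $\tau$: at the moment $\tau$ was assigned its color $c_j$, the rule guarantees that the root $r$ had at least $\bigl\lfloor \tfrac{2}{c}\cdot n^{2(c-j)/(c(c-1))}\bigr\rfloor$ $(j-1)$-children of color $c_j$. For any $j$ with $1 \leq j \leq c-1$, this quantity is at least $1$ once $n$ is sufficiently large, which is the standing assumption behind the entire construction of $k$-configurations. Next I would invoke the definition of a color violation at $\tau$: such a violation occurs precisely when $r$ has no $(j-1)$-child of color $c_j$. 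The definition of validity explicitly forbids a color violation at $\tau$ itself (the additional conditions about core sub-trees are not needed for this particular observation). Combining these two facts, a valid $\tau$ cannot have zero $c_j$-colored $(j-1)$-children of its root, so at least one must exist.

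I do not anticipate any real obstacle, since the statement is essentially a tautology once the preceding definitions are unpacked. The only quantitative item worth a brief comment is the inequality $\bigl\lfloor \tfrac{2}{c}\cdot n^{2(c-j)/(c(c-1))}\bigr\rfloor \geq 1$, which is immediate from $j \leq c-1$ together with the assumption that $n$ is large enough for the $k$-trees and $k$-configurations in the preceding subsection to be well-defined. The role of the observation in the wider proof will be to support later inductive color-blocking arguments: starting from a valid $k$-tree one can walk down through its chain of core $j$-trees and, at each level, locate a child whose color is forced, which is exactly what this observation provides at a single level.
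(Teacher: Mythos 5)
Your proof is correct and matches the paper's reasoning; in fact the paper states this observation without any proof at all, treating it as immediate from the definitions of color assignment and color violation, which is exactly what you unpack. The only slight redundancy is that the floor-quantity lower bound is not strictly needed once you note that ``valid'' already means ``no color violation at $\tau$,'' which by definition says $r$ still has a $(j-1)$-child of color $c_j$.
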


The following result shows how colors are distributed inside a valid $k$-tree.

\begin{lemma}\label{lemma:Subtrees have the same color}
Let $F_k$ be a valid $k$-configuration.
For each $1\leq j < k$, each core $j$-tree of a valid $k$-tree of $F_k$ has color $c_j$ assigned to it. Moreover, $c_i \neq c_j$ for each $1\leq i < j < k$.
\end{lemma}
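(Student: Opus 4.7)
The plan is to prove both assertions together by strong induction on $k$. For the base case $k = 1$ both claims are vacuous, so fix $k \ge 2$ and assume the lemma for all smaller $k' < k$.

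For Part 1, fix a valid $k$-tree $\tau \in F_k$ with root $r$. By the construction of $F_k$ from $F_{k-1}$, the core $(k-1)$-tree of $\tau$ is exactly one of the $(k-1)$-trees selected from $F_{k-1}$, and those were chosen to share the common assigned color $c_{k-1}$, which settles the case $j = k-1$. For $j < k-1$, the core $j$-tree of $\tau$ (the $j$-subtree rooted at $r$) coincides with the core $j$-tree of $\tau$'s core $(k-1)$-tree, because attaching extra $(k-1)$-subtrees at $r$ when promoting a $(k-1)$-tree to a $k$-tree does not alter $r$'s $j$-subtree for any $j < k-1$. Since $\tau$ is valid the core $(k-1)$-tree is valid (by the very definition of validity), and $F_{k-1}$ is a valid $(k-1)$-configuration, so the inductive hypothesis yields that this core $j$-tree has assigned color $c_j$.

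For Part 2 I would isolate the following auxiliary claim: for every $1 \le j \le k-1$, every valid $j$-tree of $F_j$ is assigned a color lying in $\{1,\dots,c\} \setminus \{c_1,\dots,c_{j-1}\}$. Granting this, $c_j$, being defined as the common assigned color of the $j$-trees of $F_j$ used to build $F_{j+1}$, must itself lie outside $\{c_1,\dots,c_{j-1}\}$, so $c_i \neq c_j$ for every $i < j < k$. To prove the auxiliary claim, let $\sigma$ be a valid $j$-tree in $F_j$ with root $u$, and let $u'$ be any $(j-1)$-child of $u$. Then $u'$ is the root of a $(j-1)$-tree $\sigma'$ of $F_{j-1}$ which, by construction of $\sigma$, has assigned color $c_{j-1}$. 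By the already-proved Part 1 applied to $F_{j-1}$ (inductive hypothesis), for every $1 \le i \le j-2$ the core $i$-tree of $\sigma'$ has assigned color $c_i$, and all these core trees share the single root $u'$. Applying Observation~7 to $\sigma'$ itself and to each of its core $i$-trees, $u'$ has, in the current proper coloring, a child of actual color $c_i$ for every $1 \le i \le j-1$. Properness of the coloring therefore forbids $u'$ from having any color in $\{c_1,\dots,c_{j-1}\}$. Since the color-assignment step picks for $\sigma$ a color shared by many $(j-1)$-children of $u$, and every such child avoids $\{c_1,\dots,c_{j-1}\}$, the color assigned to $\sigma$ avoids this set as well.

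The main obstacle is the careful bookkeeping between two distinct meanings of ``color'': the abstract \emph{assigned color} that the adversary attaches to an entire tree during construction, and the concrete $c$-coloring that $A$ paints on individual vertices. Observation~7 is the bridge between these notions, and the crux of Part 2 is to use it simultaneously for the $j-1$ nested core subtrees that share the same root $u'$, converting $j-1$ assigned-color facts about trees into $j-1$ distinct forbidden actual colors at a single vertex. Once that conversion is in hand, the pigeonhole-based color-assignment step propagates the exclusion from the children up to the tree $\sigma$, sealing the induction.
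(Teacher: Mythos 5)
Your proposal is correct and takes essentially the same route as the paper: induction on $k$, with the first claim read off from the construction of $F_k$ plus the inductive hypothesis, and the second claim obtained by applying the observation that a valid tree's root has a child of its assigned color to the nested core trees sharing a single root, then invoking properness of the coloring. The only (cosmetic) difference is where the properness argument is anchored: the paper argues at the child $r$ whose actual color is $c_j$ inside a core $j$-tree, deducing $c_i\neq c_j$ from $r$'s adjacency to a child colored $c_i$, whereas you argue at an arbitrary $(j-1)$-child at assignment time to show the assigned color avoids $\{c_1,\dots,c_{j-1}\}$ -- both versions rely, to the same degree as the paper, on the relevant $(j-1)$-subtrees being valid at the moment considered.
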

\begin{proof}
The proof goes by induction on $k$.
For $k= 0$ the results holds trivially.
Assume the result holds for $k-1$.

When constructing $F_{k-1}$ from $F_k$, we know that each $(k-1)$-tree in $F_k$ is assigned the same color $c_{k-1}$. Moreover, by the induction hypothesis, for each $1\leq j < k-1$, each core $j$-tree of a valid $(k-1)$-tree in $F_{k-1}$ had color $c_j$ assigned to it. Thus, each core $j$-tree of a valid $k$-tree also has color~$c_j$ assigned to it. 

We now show that $c_i\neq c_j$ for each $i < j$.
Let $\tau_j$ be a core $j$-tree of a valid $k$-tree in $F_k$ with color $c_j$. 
Since every core $j$-tree of a valid $k$-tree is also valid, $\tau_j$ is a valid $j$-tree. 
Therefore, there is a $(j-1)$-child, say $r$, of $\tau_k$  of color $c_j$. 
Let $\tau_{j-1}$ be the $(j-1)$-subtree of $\tau_j$ rooted at $r$. 
Since $\tau_{j-1}$ has color $c_{j-1}$ assigned to it by the first part of this lemma, we know that its root cannot have color $c_{j-1}$. 
Therefore, $c_j\neq c_{j-1}$ and hence, we can assume that $i< j-1$.

By construction and since $i < j-1$, we know that $r$ is also the root of its core $i$-tree, say $\tau_i$. 
Because $\tau_i$ is valid and has color $c_i$, it must have an $(i-1)$-child $v$ of color $c_i$. 
Since $r$ is the root of $\tau_i$, $r$ and $v$ are adjacent.
Because $r$ has color $c_j$ while $v$ has color $c_i$, and since $F_k$ is $c$-colored, we conclude that $c_i\neq c_j$.
\end{proof}

We also provide bounds on the number of updates needed to construct a $k$-configuration.

\begin{lemma}\label{lemma:Link and cuts to construct k-config}
Using \mbox{$\Theta(\sum_{i=j}^k T_i) = \Theta(T_j)$} edge insertions, we can construct a $k$-configuration from a valid $j$-configuration.
\end{lemma}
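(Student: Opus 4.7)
The plan is to construct the $k$-configuration level by level, transitioning from the valid $j$-configuration $F_j$ to a $(j+1)$-configuration, then to a $(j+2)$-configuration, and so on up to $F_k$, using only edge insertions (together with some free edge deletions to detach subtrees from any larger structures, which the lemma does not count). For the transition from level $i$ to level $i+1$, the validity of $F_i$ guarantees at least $T_i/2$ valid $i$-trees in the current configuration, and by the pigeonhole principle over the $c$ possible assigned colors, at least $T_i/(2c)$ of those valid $i$-trees share a common assigned color. Following the construction described in the text, we partition these $T_i/(2c)$ same-colored $i$-trees into $T_{i+1}$ groups of $2n^{2(c-i-1)/(c(c-1))}$ trees each, and in each group we choose a distinguished $i$-tree and connect the root of every other $i$-tree in the group to its root by a new edge.

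The number of edge insertions needed at this step is exactly the total number of such inter-tree connections, namely $T_{i+1}\bigl(2n^{2(c-i-1)/(c(c-1))}-1\bigr)$. From the definition of $T_i$ one directly obtains the recursion
\[
\frac{T_{i+1}}{T_i} \;=\; \frac{1}{4c}\cdot n^{-\frac{2(c-i-1)}{c(c-1)}},
\]
which rearranges to $T_{i+1}\cdot 2n^{2(c-i-1)/(c(c-1))} = T_i/(2c)$, so the insertion count for the level-$i$-to-level-$(i+1)$ transition lies between $T_i/(2c)-T_{i+1}$ and $T_i/(2c)$. This is $\Theta(T_i)$, since $c$ is a constant throughout the lower bound. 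Summing over $j \le i < k$ yields a total of $\Theta\bigl(\sum_{i=j}^{k-1} T_i\bigr) = \Theta\bigl(\sum_{i=j}^{k} T_i\bigr)$ edge insertions.

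It remains to verify the second claimed equality $\Theta(\sum_{i=j}^k T_i) = \Theta(T_j)$. The recursion above gives $T_{i+1}/T_i \le 1/(4c) \le 1/8$ for every $0 \le i \le c-1$ (using $c \ge 2$), so the sequence $T_j, T_{j+1}, \ldots, T_k$ is bounded above by a geometric sequence of ratio $1/(4c)$. Hence
\[
T_j \;\le\; \sum_{i=j}^{k} T_i \;\le\; T_j\sum_{r=0}^{\infty}(4c)^{-r} \;=\; \frac{4c}{4c-1}\,T_j \;\le\; 2\,T_j,
\]
which establishes the equality and completes the argument. The only mildly delicate point is the algebraic verification of the ratio $T_{i+1}/T_i$; once that one-line computation is in hand, both the per-level insertion count and the geometric-decrease bound follow immediately, and the lemma reduces to assembling these ingredients.
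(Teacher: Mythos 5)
Your proposal is correct and follows essentially the same route as the paper's proof: each transition from level $i$ to level $i+1$ merges $T_i/(2c)$ same-colored trees into $T_{i+1}$ trees at a cost of $\Theta(T_i)$ edge insertions, and summing over the levels gives $\Theta(\sum_{i=j}^k T_i) = \Theta(T_j)$. You simply make explicit what the paper leaves implicit, namely the exact insertion count $T_i/(2c) - T_{i+1}$ per level and the geometric decay $T_{i+1}/T_i \le 1/(4c)$ that collapses the sum to $\Theta(T_j)$.
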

\begin{proof}
To merge $\frac{T_{k-1}}{2c}$ $(k-1)$-trees to into $T_k$ $k$-trees, we need $\Theta(T_{k-1})$ edge insertions. Thus, in total, to construct a $k$-configuration from a $j$-configuration, we need $\Theta(\sum_{i=j}^k T_i) = \Theta(T_j)$ edge insertions.
\end{proof}

\subsection{Reset phase}
Throughout the construction of a $k$-configuration, the recoloring-algorithm $A$ may recolor several vertices which could lead to invalid subtrees in $F_j$ for any $1 \leq j < k$. 
Because $A$ may invalidate some trees from $F_j$ while constructing $F_k$ from $F_{k-1}$, one of two things can happen. If $F_j$ is a valid $j$-configuration for each $1\leq j\leq k$, then we continue and try to construct a $(k+1)$-configuration from~$F_k$. Otherwise a \emph{reset} is triggered as follows. 

Let $1\leq j < k$ be an integer such that $F_i$ is a valid $i$-configuration for each $0\leq i\leq j-1$, but $F_j$ is not valid. 
Since $F_j$ was a valid $j$-configuration with at least $T_{j}$ valid $j$-trees when it was first constructed, we know that in the process of constructing $F_k$ from $F_j$, at least $\frac{T_j}{2}$ $j$-trees where invalidated by $A$.
We distinguish two ways in which a tree can be invalid:
\begin{itemize}
\item[] (1) the tree has a color violation, but all its $j-1$-subtrees are valid and no core $i$-tree for $1 \leq i \leq j-1$ has a color violation; or
\item[] (2) A core $i$-tree has a color violation for $1 \leq i \leq j-1$, or the tree has a color violation and at least one of its $(j-1)$-subtrees is invalid.
\end{itemize}
In case (1) the algorithm A has to perform fewer recolorings, but the tree can be made valid again with a color reassignment, whereas in case (2) the $j$-tree has to be rebuild.

Let $Y_0, Y_1$ and $Y_2$ respectively be the set of $j$-trees of $F_j$ that are either valid, or are invalid by case (1) or (2) respectively.
Because at least $\frac{T_j}{2}$ $j$-trees were invalidated, we know that $|Y_1| + |Y_2| > \frac{T_j}{2}$. 
Moreover, for each tree in $Y_1$, $A$ recolored at least  $\frac{2}{c}\cdot n^\frac{2(c-j)}{c(c-1)}-1$ vertices to create the color violation on this $j$-tree by Observation~\ref{obs:Recolors to invalidate}.
For each tree in $Y_2$ however, $A$ created a color violation in some $i$-tree for $i < j$.
Therefore, for each tree in $Y_2$, by Observation~\ref{obs:Recolors to invalidate}, the number of vertices that $A$ recolored is at least 
$\frac{2}{c}\cdot n^\frac{2(c-i)}{c(c-1)}-1 >\frac{2}{c}\cdot n^\frac{2(c-j+1)}{c(c-1)}-1.$

\textbf{Case 1:} $|Y_1| > |Y_2|$. Recall that each $j$-tree in $Y_1$ has only valid $(j-1)$-subtrees by the definition of $Y_1$. 
Therefore, each $j$-tree in $Y_1$ can be made valid again by performing a color assignment on it while performing no update.
In this way, we obtain $|Y_0|+|Y_1|> \frac{T_j}{2}$ valid $j$-trees, i.e., $F_j$ becomes a valid $j$-configuration contained in $F_k$. 
Notice that when a color assignment is performed on a $j$-tree, vertex recolorings previously performed on its $(j-1)$-children cannot be counted again towards invalidating this tree. 

Since we have a valid $j$-configuration instead of a valid $k$-configuration, we ``wasted'' some edge insertions. 
We say that the insertion of each edge in $F_k$ that is not an edge of $F_j$ is a \emph{wasted} edge insertion.
By Lemma~\ref{lemma:Link and cuts to construct k-config}, to construct $F_k$ from $F_j$ we used $\Theta(T_j)$ edge insertions. That is, $\Theta(T_j)$ edge insertions became wasted.
However, while we wasted $\Theta(T_j)$ edge insertions, we also forced $A$ to perform $\Omega(|Y_1|\cdot n^\frac{2(c-j)}{c(c-1)}) = \Omega(T_j \cdot n^\frac{2(c-j)}{c(c-1)})$ vertex recolorings. 
Since $1 \leq j < k \leq c-1$, we know that $n^\frac{2(c-j)}{c(c-1)}  \geq n^\frac{2}{c(c-1)}$. 
Therefore, we can charge $A$ with $\Omega(n^\frac{2}{c(c-1)})$ vertex recolorings per wasted edge insertion.
Finally, we remove each edge corresponding to a wasted edge insertion, i.e., we remove all the edges used to construct $F_k$ from $F_j$. 
Since we assumed that $A$ performs no recoloring on edge deletions, we are left with a valid $j$-configuration $F_j$. 

\textbf{Case 2:} $|Y_2| > |Y_1|$. In this case $|Y_2| > \frac{T_j}{4}$. 
Recall that $F_{j-1}$ is a valid $(j-1)$-configuration by our choice of $j$.
In this case, we say that the insertion of each edge in $F_k$ that is not an edge of $F_{j-1}$ is a \emph{wasted} edge insertion.
By Lemma~\ref{lemma:Link and cuts to construct k-config}, we constructed $F_k$ from $F_{j-1}$ using $\Theta(T_{j-1})$ wasted edge insertions. 
However, while we wasted $\Theta(T_{j-1})$ edge insertions, we also forced $A$ to perform $\Omega(|Y_2| \cdot n^\frac{2(c-j+1)}{c(c-1)}) = \Omega(T_j \cdot n^\frac{2(c-j+1)}{c(c-1)})$ vertex recolorings. 
That is, we can charge $A$ with $\Omega(\frac{T_j}{T_{j-1}}\cdot  n^\frac{2(c-j+1)}{c(c-1)})$ vertex recolorings per wasted edge insertions. 
Since $\frac{T_{j-1}}{T_j} = 4c\cdot n^\frac{2(c-j)}{c(c-1)}$, we conclude that $A$ was charged $\Omega(n^\frac{2}{c(c-1)})$ vertex recolorings per wasted edge insertion.
Finally, we remove each edge corresponding to a wasted edge insertion, i.e., we go back to the valid $(j-1)$-configuration $F_{j-1}$ as before.

Regardless of the case, we know that during a reset consisting of a sequence of $h$ wasted edge insertions, we charged $A$ with the recoloring of $\Omega(h\cdot n^\frac{2}{c(c-1)})$ vertices. 
Notice that each edge insertion is counted as wasted at most once as the edge that it corresponds to is  deleted during the reset phase.
A vertex recoloring may be counted more than once. However, a vertex recoloring on a vertex $v$ can count towards invalidating any of the trees it belongs to. Recall though that $v$ belongs to at most one $i$-tree for each $0\leq i\leq c$. 
Moreover, two things can happen during a reset phase that count the recoloring of $v$ towards the invalidation of a $j$-tree containing it: either (1) a color assignment is performed on this $j$-tree or (2) this $j$-tree is destroyed by removing its edges corresponding to wasted edge insertions. In the former case, we know that $v$ needs to be recolored again in order to contribute to invalidating this $j$-tree. In the latter case, the tree is destroyed and hence, the recoloring of $v$ cannot be counted again towards invalidating it. 
Therefore, the recoloring of a vertex can be counted towards invalidating any $j$-tree at most $c$ times throughout the entire construction. Since $c$ is assumed to be a constant, we obtain the following result. 

\begin{lemma}\label{lemma:Charging recolorings to A}
After a reset phase in which $h$ edge insertions become wasted, we can charge $A$ with $\Omega(h\cdot n^\frac{2}{c(c-1)})$ vertex recolorings. Moreover, $A$ will be charged at most $O(1)$ times for each recoloring.
\end{lemma}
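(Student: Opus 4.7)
The plan is to combine the two case analyses (Case 1 and Case 2) that have already been carried out in the preceding paragraphs, and then verify that the overcounting of any single vertex recoloring is bounded by a constant.

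First I would observe that the two cases above already do the bulk of the work: in either case, the ratio of forced recolorings to wasted edge insertions is $\Omega(n^{2/(c(c-1))})$. Summing this ratio over the $h$ wasted insertions of the reset phase immediately yields the claimed $\Omega(h\cdot n^{2/(c(c-1))})$ lower bound on the number of recolorings charged to $A$, \emph{provided} that each recoloring really represents distinct ``new'' work done by $A$. So the whole burden of the lemma is the second sentence: showing that a single recoloring of a vertex $v$ is charged at most a constant number of times across the entire execution.

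For the overcounting bound I would argue as follows. A recoloring of a vertex $v$ is only ever charged when it contributes to invalidating some $j$-tree that contains $v$. By construction, for each level $0 \leq j \leq c$ the vertex $v$ lies in at most one $j$-tree (its $j$-subtree within the current configuration), so there are at most $c+1$ distinct trees to which the recoloring of $v$ could ever be charged. It therefore suffices to prove that, for any fixed $j$-tree $\tau$ containing $v$, the recoloring of $v$ is charged to the invalidation of $\tau$ at most once. This reduces the global overcounting to $O(c) = O(1)$ times, as claimed.

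To prove the ``once per tree'' claim, fix a tree $\tau$ and consider what happens to $\tau$ during a reset phase in which its invalidation is charged. Following the case distinction above, exactly one of two things occurs to $\tau$: either a new color assignment is performed on $\tau$ (Case~1), or $\tau$ is destroyed by removing the wasted edges that built it (Case~2). In Case~1, the color assigned to $\tau$ changes, so for the recoloring of $v$ to contribute to a \emph{future} invalidation of $\tau$ under its \emph{new} assignment, $A$ would have to recolor $v$ again --- the current recoloring has exhausted its contribution to $\tau$. In Case~2, the tree $\tau$ no longer exists after the reset, so no further charge can be attributed to it. In both cases the recoloring of $v$ is charged to $\tau$ at most once.

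The main obstacle, and the only subtle point, is the bookkeeping in the preceding paragraph: one must be careful that a single recoloring is not being double-charged both within the same reset (to different trees) and across subsequent resets (to reappearances of the same logical tree). The first is handled by the ``at most one $j$-tree per level'' observation, and the second by the fact that a color assignment or tree destruction resets the contribution of any earlier recoloring of $v$ to that tree. Once these two observations are spelled out, the $O(1)$-overcounting bound, and hence the lemma, follows directly from the per-case bounds already established.
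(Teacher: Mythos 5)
Your proof is correct and follows essentially the same approach as the paper: the per-wasted-insertion bound is inherited from the preceding Case~1/Case~2 analysis, the overcounting is bounded by noting that $v$ lies in at most one $j$-tree per level, and the ``once per tree'' claim is established via the same dichotomy between color reassignment and tree destruction. The paper's own argument is nearly identical in both structure and detail.
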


After a reset, we consider our new valid $j$- or $(j-1)$-configuration (depending on the above case), and continue our construction trying to reach a $c$-configuration. 

\subsection{Constructing a \texorpdfstring{$c$}{c}-tree}

If $A$ stops triggering resets, then at some point we reach a $(c-1)$-configuration. 
In this section, we describe what happens when constructing a $c$-configuration from this  $(c-1)$-configuration.
Recall that a color $c_i$ is blocked for the root of a $k$-tree if this root has a child with color $c_i$.

\begin{lemma}\label{lemma:Same colors blocked}
Let $F_k$ be a valid $k$-configuration. 
Then colors $\{c_1, c_2, \ldots, c_{k-1}\}$ are blocked for the root of each valid $k$-tree in $F_k$. 
\end{lemma}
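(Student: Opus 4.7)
The statement follows quickly by combining Lemma~\ref{lemma:Subtrees have the same color} with Observation~\ref{obs:Property of colored k-trees}. Let $\tau$ be a valid $k$-tree in $F_k$ with root $r$, and fix an index $j$ with $1 \leq j \leq k-1$. My plan is to exhibit, for every such $j$, a child of $r$ in $\tau$ whose color is $c_j$; this is exactly what it means for $c_j$ to be blocked at $r$.

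First, I would invoke Lemma~\ref{lemma:Subtrees have the same color}: since $\tau$ is a valid $k$-tree in the valid configuration $F_k$, its core $j$-tree, call it $\tau_j$, has color $c_j$ assigned to it. By the definition of a core subtree, $\tau_j$ shares its root with $\tau$, namely $r$. Moreover, since $\tau$ is valid, its core $j$-tree $\tau_j$ is valid as well (otherwise $\tau$ would have a core-subtree color violation and hence be invalid by the definition of invalidity).

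Next, I apply Observation~\ref{obs:Property of colored k-trees} to the valid $j$-tree $\tau_j$ with assigned color $c_j$: the root of $\tau_j$, which is $r$, must have at least one $(j-1)$-child in $\tau_j$ carrying the color $c_j$. Call this child $v$. By construction $v$ is a $(j-1)$-child of $r$ in $\tau_j$, so the edge $rv$ belongs to $\tau_j$ and therefore to $\tau$. Hence $v$ is a child of $r$ in $\tau$ with color $c_j$, so $c_j$ is blocked for~$r$.

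Since $j$ was arbitrary in $\{1,\dots,k-1\}$, every color in $\{c_1,\dots,c_{k-1}\}$ is blocked for~$r$, as claimed. The argument is essentially a one-shot application of the two previous results; the only minor subtlety to verify is that validity of $\tau$ really does transfer to validity of each core $j$-tree, which is immediate from the definition of invalid (condition (2) of invalidity explicitly forbids a color violation in any core $i$-tree).
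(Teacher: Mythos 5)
Your proof is correct and follows essentially the same route as the paper: apply Lemma~\ref{lemma:Subtrees have the same color} to identify the assigned color $c_j$ of the core $j$-tree rooted at $r$, then use Observation~\ref{obs:Property of colored k-trees} on that (valid) core $j$-tree to produce a child of $r$ with color $c_j$. Your extra remark that validity of $\tau$ transfers to its core $j$-trees is a point the paper states implicitly, so there is no substantive difference.
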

\begin{proof}
Let $\tau$ be a valid $k$-tree in $F_k$ with root $r$. 
Recall that $\tau$ is also the root of a valid $j$-tree for each $1\leq j < k$.
Let $\tau_j$ be the $j$-tree rooted at $r$.
By Lemma~\ref{lemma:Subtrees have the same color}, we know that $\tau_j$ was assigned color $c_j$. By Observation~\ref{obs:Property of colored k-trees}, $r$ has a child of color~$c_j$. Therefore, $r$ has color $c_j$ blocked.
In summary, $r$ has colors $\{c_1, c_2, \ldots, c_{k-1}\}$ blocked.
\end{proof}

A valid $(c-1)$-configuration $F_{c-1}$ consists of at least $\frac{T_{c-1}}{2}$ valid $(c-1)$-trees, where
$$T_{c-1} = \frac{\alpha}{(4c)^{c-1}}\cdot n^{1 - \sum_{i=1}^{c-1} \frac{2(c-i)}{c(c-1)}} = O(1).$$ 
Therefore, by choosing $\alpha$ sufficiently large, we can guarantee that $F_{c-1}$ consists of at least $2(c+1)$ valid $(c-1)$-trees.

Because $F_{c-1}$ is valid, half of its $(c+1)$-trees are valid, i.e., it consists of at least $(c+1)$ valid $(c-1)$-trees.
Because each of these trees has a color assigned to it, among them at least two valid $(c-1)$-trees $\tau$ and $\tau'$ have the same color assigned to them. 
Since $F_{c-1}$ is a valid $(c-1)$-configuration,
Lemma~\ref{lemma:Same colors blocked} implies that each valid $(c-1)$-tree in $F_{c-1}$ has colors $\{c_1, \ldots, c_{c-2}\}$ blocked.
Let $c_{c-1}$ denote the color assigned to $\tau$ and $\tau'$. 

Note that the roots of $\tau$ and $\tau'$ have color $c_{c-1}$ blocked by Observation~\ref{obs:Property of colored k-trees}. 
Moreover, since both $\tau$ and $\tau'$ have colors $\{c_1, \ldots, c_{c-2}\}$ blocked, we conclude that their roots have the same color. 

To construct a $c$-tree, we consider these $2\cdot n^\frac{2(c-c)}{c(c-1)} = 2$ valid $(c-1)$-trees and add an edge connecting their roots.
Since the roots of $\tau$ and $\tau'$ have the same color, $A$ needs to recolor one of them, say $r$. 
However, to recolor $r$ with color $c_i$, it must recolor each child of $r$ with color $c_i$. 
That is, in the core $i$-tree rooted at $r$, $r$ ends with no children of color $c_i$.
Since this $i$-tree has color $c_i$ assigned to it by Lemma~\ref{lemma:Subtrees have the same color}, this makes the core $i$-tree rooted at $r$ invalid and triggers a reset. 
Therefore, every time we reach a $c$-configuration we guarantee that a reset is triggered.

\begin{theorem}\label{thm:lb-general}
Let $c$ be a constant.
For any sufficiently large integers $n$ and $\alpha$ depending only on $c$, and any $m  = \Omega(n)$ sufficiently large, there exists a forest $F$ with $\alpha n$ vertices, such that for any recoloring algorithm $A$, there exists a sequence of $m$ updates that forces $A$ to perform $\Omega(m\cdot n^\frac{2}{c(c-1)})$ vertex recolorings to maintain a $c$-coloring of $F$.
\end{theorem}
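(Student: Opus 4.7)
The plan is to combine the machinery developed in the previous subsections into a single long-running adversary strategy. I would fix $F$ to be an initial $0$-configuration $F_0$ of $\alpha n$ isolated vertices, choosing $\alpha$ (as in the ``Constructing a $c$-tree'' subsection) large enough so that every valid $(c-1)$-configuration contains at least $2(c+1)$ valid $(c-1)$-trees; since the exponent of $n$ in $T_{c-1}$ telescopes to zero, we have $T_{c-1} = \alpha/(4c)^{c-1}$, independent of $n$, so it suffices to take $\alpha$ a sufficiently large constant depending only on $c$.

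The adversary then runs the following loop until $m$ updates have been issued. Starting from the current valid $j$-configuration (initially $j=0$), it applies the merging procedure of Lemma~\ref{lemma:Link and cuts to construct k-config} to attempt to build up to a $(c-1)$-configuration. One of two outcomes occurs. Either $A$ invalidates some intermediate $F_{j'}$ along the way, triggering a reset of the type analyzed in the ``Reset phase'' subsection and wasting $h = \Theta(T_{j'})$ or $\Theta(T_{j'-1})$ edge insertions; or the adversary reaches a valid $F_{c-1}$ and, as in the ``Constructing a $c$-tree'' subsection, inserts a single edge between two same-colored $(c-1)$-trees to force $A$ to invalidate a core tree, triggering a reset that wastes at least one insertion. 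In either case, Lemma~\ref{lemma:Charging recolorings to A} charges $A$ with $\Omega(h \cdot n^{2/(c(c-1))})$ vertex recolorings for the $h$ wasted insertions of that reset, with each individual recoloring counted at most $O(1)$ times over the entire sequence.

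To finish, I would sum over all iterations to bound the total number of wasted insertions by $\Omega(m)$. The key point is that every inserted edge eventually becomes wasted, because each reset removes all edges above the level it rewinds to and the loop guarantees that a reset occurs on every iteration. Since at most $O(T_0) = O(n)$ edges can be ``in flight'' (inserted but not yet wasted) at any moment, and $m = \Omega(n)$ is assumed, we obtain $\sum_i h_i = \Omega(m)$; multiplying by the per-waste charge $\Omega(n^{2/(c(c-1))})$ then yields the desired total of $\Omega(m \cdot n^{2/(c(c-1))})$ recolorings. I expect the main obstacle to lie in this final accounting: carefully arguing that the in-flight edges contribute only an $O(n)$ additive loss (which is absorbed since $m = \Omega(n)$) and that the $O(1)$ overcounting factor from Lemma~\ref{lemma:Charging recolorings to A}, which depends on the constant $c$, remains a constant when summed over possibly many resets rather than accumulating into a super-constant blow-up.
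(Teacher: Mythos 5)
Your proposal follows essentially the same route as the paper: run the configuration-building adversary until $m$ updates are spent, charge $\Omega(n^{2/(c(c-1))})$ recolorings per wasted edge insertion via Lemma~\ref{lemma:Charging recolorings to A}, and note that only the $O(n)$ edges present in the forest at any time are non-wasted, which is absorbed since $m=\Omega(n)$. The only detail the paper makes explicit that you leave implicit is that at least $m/2$ of the $m$ updates are insertions (an edge must be inserted before it can be deleted), which is what lets the wasted-insertion count be $\Omega(m)$; your concern about the $O(1)$ overcounting is already settled globally by that lemma, since each recoloring is charged at most $c$ times over the whole sequence, not per reset.
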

\begin{proof}
Use the construction described in this section until $m$ updates have been performed.
Let $m'\leq m$ be the number of edge insertions during this sequence of $m$ updates.
Notice that $m' \geq m/2$ as an edge can only be deleted if it was first inserted and we start with a graph having no edges.

During the construction, $A$ can be charged with $\Omega(n^\frac{2}{c(c-1)})$ vertex recolorings per wasted edge insertion by Lemma~\ref{lemma:Charging recolorings to A}. 
Because the graph in our construction consists of at most $O(n)$ edges at all times, 
at most $O(n)$ of the performed edge insertions are non-wasted.
Since every other edge insertion is wasted during a reset, we know that $A$ recolored $\Omega((m'-n)\cdot n^\frac{2}{c(c-1)})$ vertices. 
Because $m' \geq m/2$ and since $m = \Omega(n)$, our results follows.
\end{proof}

\section{Conclusion}
In this paper we introduced the first method for recoloring few vertices so as to maintain a proper coloring of a large graph with theoretical guarantees. These results give rise to a number of open problems. The obvious one being to close the gap between the upper bounds achieved by our algorithms and the lower bound construction. This question is open even for the case of dynamic forests. It is also worth investigating if a similar lower bound construction can give improved lower bounds for graphs with a higher chromatic number. 

Another thing to note is that our algorithms use the maximum chromatic number. This is undesirable when the graph starts with high chromatic number, but after a number of operations has far lower chromatic number, for example because a number of edges of a large clique are deleted. In this case, an upper bound on the number of colors and recolorings that uses the current chromatic number instead of the maximum would be better. 

Finally, there are a number of different models to consider. For example, can we improve the algorithms when we support only a subset of the operations, such as only vertex insertion? A number of operations can be simulated using the other operations (for example, vertex removal can be effectively achieved by removing all edges to the vertex and ignoring it in the rest of the execution), however this changes the number of operations we execute, possibly allowing fewer recolorings per operation. Similarly, it is interesting to see what happens when we allow different operations, such as edge flips on triangulations or edge slides for trees?

\bibliographystyle{abbrv} 
\bibliography{dynamicColoring}

\end{document}